\theoremstyle{plain}
\newtheorem{lemma}{Lemma}
\newtheorem{theorem}{Theorem}
\newtheorem{corollary}{ Corollary }
\newtheorem{definition}{Definition}
\newtheorem{proposition}{Proposition}
\theoremstyle{definition}
\newtheorem{remark}{Remark}
\newcommand{\be}{\begin{equation}}
\newcommand{\ee}{\end{equation}}
\begin{document}

\title{\bf The $1/N$ expansion of tensor models with two symmetric tensors}

\author[1]{Razvan Gurau}

\affil[1]{\normalsize\it Centre de Physique Th\'eorique, \'Ecole Polytechnique, CNRS, F-91128 Palaiseau, France
and Perimeter Institute for Theoretical Physics, 31 Caroline St. N, N2L 2Y5, Waterloo, ON, Canada. \authorcr
email: rgurau@cpht.polytechnique.fr \authorcr \hfill}

\date{}

\maketitle

\hrule\bigskip

\begin{abstract}
It is well known that tensor models for a tensor with no symmetry admit a $1/N$ expansion dominated by melonic graphs.
This result relies crucially on identifying \emph{jackets} which are globally defined ribbon graphs embedded in the tensor graph.
In contrast, no result of this kind has so far been established for symmetric tensors because global jackets do not 
exist. 

In this paper we introduce a new approach to the $1/N$ expansion in tensor models adapted to symmetric tensors. 
In particular we do not use any global structure like the jackets.  We prove that, for any 
rank $D$, a tensor model with two symmetric tensors and interactions the complete graph $K_{D+1}$ admits a $1/N$ expansion 
dominated by melonic graphs. 

\end{abstract}

\bigskip

\hrule\bigskip

\tableofcontents

\bigskip

\section{Introduction and discussion}

The two main families of tensor models for a non symmetric tensor\footnote{See \cite{GKZ,Diaz:2017kub} for the algebraic properties of non symmetric tensors.}, the colored models \cite{color,expansion1,expansion2,expansion3,review} and 
the general invariant models \cite{RTM,uncoloring,Carrozza:2015adg}  have been thoroughly studied over the past several years.
Their $1/N$ expansion has been established in arbitrary rank \cite{RTM,expansion1,expansion2,expansion3,uncoloring,GurSch,Carrozza:2015adg} and, for some models, 
non perturbatively \cite{expansioin6}. Similar results hold for the multi-orientable tensor model in rank $3$ \cite{expansioin5,Fusy:2014rba}
In all cases, the large $N$ limit is dominated by melonic graphs \cite{critical}. 
The $1/N$ expansion in tensor models gives the third (and last)
universality class of such expansions, different from both the vector and the matrix case. 
Starting from these results a new large $D$ limit in models with a large number $D$ of matrices \cite{Ferrari:2017ryl,Azeyanagi:2017drg,Ferrari:2017jgw} has been discovered.

However, the first tensor models considered in the literature \cite{sasa1,ambj3dqg} were formulated for symmetric tensors. 
In spite of the many successes of the theory of random non symmetric tensors, until recently, there has been \emph{no result} 
concerning the $1/N$ expansion in the symmetric case. 
Developments in a very different area bring a renewed interest in tensor models with symmetric tensors.

The Sachdev--Ye--Kitaev (SYK) model \cite{Sachdev:1992fk,Kitaev}, which is a model of $N$ fermions with quenched random couplings,
provides in the large $N$ limit a one dimensional nearly conformal field theory 
which is the $CFT_1$ dual of a black hole in $AdS_2$. This concrete realization of the 
$AdS/CFT$ duality has been studied in depth
\cite{Maldacena:2016hyu,Polchinski:2016xgd,Fu:2016vas,Gross:2016kjj,Das:2017pif,Gross:2017hcz}.
The main feature of the SYK model is that its large $N$ limit is solvable.
It turns out that this limit is solvable because the SYK model is a tensor model in disguise: the random tensor is 
the tensor of random couplings. As for all tensor models for which a large $N$ limit has been proven to exit, 
the large $N$ limit of the SYK model is dominated by melonic graphs\cite{Maldacena:2016hyu,Polchinski:2016xgd,Jevicki:2016bwu}.
The melonic large $N$ limit is universal \cite{universality} in non symmetric random tensors hence it is only natural 
to consider a tensor version of the SYK model \cite{Witten:2016iux,Gurau:2016lzk,Klebanov:2016xxf,Peng:2016mxj,Krishnan:2016bvg,Peng:2017kro}.
Besides eliminating the quenching, the tensor SYK models are genuine gauge theories and 
come equipped with a full set of invariant observables, clarifying in particular the status of the singlets in the usual SYK model.
The $1/N$ corrections are also accessible in the tensor SYK models \cite{Bonzom:2017pqs,Dartois:2017xoe}\footnote{
We note however that the results of \cite{Dartois:2017xoe} in this direction are only partial as they ignore the effect of the 
explicit breaking of conformality in the leading order four point function \cite{Maldacena:2016hyu}. }.

A feature of tensor models for non symmetric tensors is that they have a rather large gauge group consisting in many copies of the unitary or the orthogonal group \cite{RTM}. 
In order to have a gauge theory for only one copy of the unitary or the orthogonal group, one would like 
to use symmetric or antisymmetric tensors. 

The problem is that no result concerning the $1/N$ expansion for models with symmetric or antisymmetric tensors has so far been established. 
In fact, if a large $N$ limit exists for models with one symmetric tensor, it \emph{can not} be dominated by melonic graphs:
symmetric tensors generate a family of pathological graphs which scale faster with $N$ than the melonic family. 

An elegant solution to this problem has been proposed recently by I. Klebanov and G. Tarnopolsky: in \cite{Klebanov:2017nlk}
the authors considered a model for a symmetric \emph{traceless} tensor. The ``tracelessness'' condition eliminates the 
pathological graphs at low orders (up to order $8$ \cite{Klebanov:2017nlk}). This led the authors to conjecture that tensor models for 
symmetric traceless (and antisymmetric) tensors in rank 3 have a $1/N$ expansion dominated by melonic graphs.

Proving this conjecture is impossible with the methods developed for the study of non symmetric tensors. Indeed, for non symmetric tensors one 
relies entirely on identifying \emph{jackets} which are global ribbon graph embedded in the tensor graphs. 
Jackets do not exist for symmetric or antisymmetric tensors. 

In this paper we introduce a new approach to the $1/N$ expansion adapted to symmetric (and antisymmetric) tensors.
Crucially, we do not introduce any global structure like the jackets, but only local moves on the graphs which can be controlled.

We study here a model with two symmetric tensors in arbitrary rank. This is done for convenience. Indeed, while the same method is 
expected to work for one antisymmetric or symmetric traceless tensor, it is significantly more complicated to check. Using two tensors has several 
consequences:
\begin{itemize}
 \item the graphs of the model are bipartite which drastically simplifies the analysis,
 \item in the symmetric case, the tensors need not be traceless and the propagator has only $6$ terms (in rank $3$) and not $15$ (see \cite{Klebanov:2017nlk}).
\end{itemize}

Using this method to establish the $1/N$ expansion for models with one antisymmetric or one symmetric traceless tensor is straightforward.
However, as establishing the $1/N$ expansion in the bipartite case is already somewhat involved, this generalization is expected to be quite 
challenging.

\newpage

\section{The model and its Feynman graphs}

From now on $D$ denotes an integer larger or equal to $3$. We consider two real symmetric tensors of rank $D$, denoted $T$ and $P$, transforming in the fundamental representation of the orthogonal group $O(N)$:
\begin{align*}
& T^O_{a_1\dots a_D} = \sum_{b_1,\dots b_D} O_{a_1b_1} \dots  O_{a_Db_D} T_{b_1\dots b_D} \;, \qquad  T_{a_1 \dots a_D} = T_{a_{\sigma(1)} \dots a_{\sigma(D)}} \;, \crcr
& P^O_{a_1\dots a_D} = \sum_{b_1,\dots b_D} O_{a_1b_1} \dots  O_{a_Db_D} P_{b_1\dots b_D} \;, \qquad  P_{a_1 \dots a_D} = P_{a_{\sigma(1)} \dots a_{\sigma(D)}}  \;, 
\end{align*}
for any $O\in O(N)$ and $\sigma\in \mathfrak{S}(D)  $ permutation of $D$ elements. The action of the two tensor model with $K_{D+1}$ interaction in rank $D=3$ is:
\begin{align}\label{eq:action}
 S(T,P) = & \sum_{a_1,a_2,a_3} T_{a_1a_2a_3} P_{a_1a_2a_3} + \frac{\lambda}{N^{3/2}} \sum_{a_1\dots a_6} T_{a_1a_2a_3} T_{a_3a_4a_5}T_{a_5 a_2 a_6} T_{a_6a_4 a_1} \crcr
      & \qquad + \frac{\lambda}{N^{3/2} } \sum_{a_1,\dots a_6} P_{a_1a_2a_3} P_{a_3a_4a_5} P_{a_5 a_2 a_6} P_{a_6a_4 a_1} \;,
\end{align}
while in arbitrary rank $D$ it is\footnote{In all rigor one defined the model as the $\epsilon \to 0$ limit of a model with covariance $\begin{pmatrix}\epsilon & \imath \\ \imath & \epsilon\end{pmatrix}$. 
We will spare the reader such tedious details.}:
\begin{align}\label{eq:action1}
 S(T,P) = & \sum_{a_1, \dots ,a_D} T_{a_1\dots a_D} P_{a_1\dots a_D} + \frac{\lambda}{N^{D(D-1)/4}} 
 \sum_{a^{ij}}  \left( \prod_{i=0}^{D} T_{a_{ii\oplus 1} \dots a_{ii\oplus D}}   \right) \prod_{0\le i<j\le D}\delta_{a_{ij} a_{ji}} \crcr
      & \qquad + \frac{\lambda}{N^{D(D-1)/4}} 
 \sum_{a^{ij}}  \left( \prod_{i=1}^{D+1} P_{a_{ii\oplus1} \dots a_{ii\oplus D}}   \right) \prod_{0\le i<j\le D}\delta_{a_{ij} a_{ji}} \;,
\end{align}
where $\oplus$ denotes addition modulo $D+1$. Observe that if one represents each tensor as a vertex and each contraction of two indices as an edge, the contraction pattern of the indices 
in the interaction term reproduces the complete graph with $D+1$ vertices, $K_{D+1}$.
The action has an $O(N)$ gauge invariance, $S(T^O,P^O) = S(T,P)$ for any $O\in O(N)$. We stress that the gauge group
is just one copy of the orthogonal group.
Our aim is to evaluate the two point function of the model:
\[
   \frac{1}{  \int [ dT dP ]  \;  e^{-S(T,P)}  }   \;  \int [dT dP]  \;\frac{1}{N^D} \left(  \sum_{a_1\dots a_D} T_{a_1 \dots a_D} P_{a_1\dots a_D}  \right) e^{-S(T,P)}    \;.
\]

The two point function is a sum over  \emph{connected, stranded} Feynman graphs, where each strand represents an index of the tensor \cite{color,expansion1,expansion3}.
One edge of the graph is marked with an oriented arrow and represents the insertion $  \sum_{a_1\dots a_D} T_{a_1 \dots a_D} P_{a_1\dots a_D} $ (and the arrow is oriented from $T$ to $P$). The vertex and the propagator 
of the model in $D=3$ and $4$ are depicted in Fig.\ref{fig:vertex}.
 \begin{figure}[htb]
 \begin{center}
 \includegraphics[width=8cm]{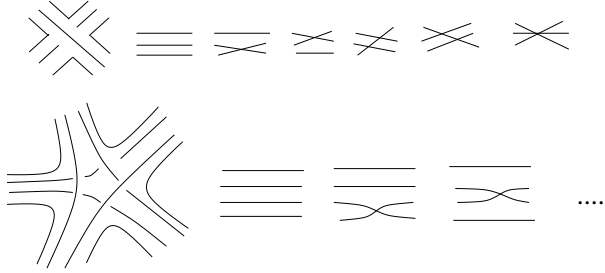}  
 \caption{The vertex and propagator of the model for $D=3$, and the vertex and some of the contributions to the propagator in $D=4$.} \label{fig:vertex}
 \end{center}
 \end{figure}
 
Due to the symmetry of the tensors there are $D!$ distinct contributions to the propagator:
\[
  \Braket{ T_{a_1 \dots a_D} P_{b_1\dots b_D}  }_0 = \frac{1}{D!} \sum_{\sigma \in \mathfrak{S}(D)}   \prod_{i=1}^D \delta_{a_{ i  }  b_{ \sigma (i)} } \;,
\]
 where the subscript zero indicates that the expectation is taken at $\lambda =0$. In $D=3$ for instance the propagator is the sum of six  terms:
\begin{align*}
& \frac{1}{6} \delta_{a_1b_1} ( \delta_{a_2b_2} \delta_{a_3 b_3} + \delta_{a_2b_3} \delta_{a_3b_2}  ) 
+  \frac{1}{6} \delta_{a_1b_2} ( \delta_{a_2b_1} \delta_{a_3 b_3} +\delta_{a_2b_3} \delta_{a_3b_1}  ) \crcr
& \qquad \qquad + \frac{1}{6} \delta_{a_1b_3} (\delta_{a_2 b_1} \delta_{a_3 b_2} + \delta_{a_2 b_2} \delta_{a_3 b_1} ) \;.
\end{align*}

The $D$ strands of the propagator always connect an index at one end of the propagator to an index at the other end. 
This is different from the traceless model of \cite{Klebanov:2017nlk}, where indices on the same side of the propagator can be
connected by a strand (the $9$ extra terms one can find in the propagator of \cite{Klebanov:2017nlk}). Adding such terms is quite tedious,
and this is one of the challenges of dealing with models with only one tensor. However, we stress that these $9$ extra terms are crucial in the one tensor
case, as they actively kill the pathological contributions in the Feynman expansion. Luckily, the two tensor model we deal with here does not generate the 
pathological contributions to begin with and one can just use symmetric tensors with no tracelessness condition.

The Feynman graphs are built by gluing stranded propagators on the stranded vertices: strands are glued together and ultimately 
close into the \emph{faces} of the graph. Observe that, as the propagator only connects a $T$ and a $P$, 
the Feynman graphs are bipartite. This will prove crucial later on.

When computing the Feynman amplitude of a connected graph one obtains a free sum over an index for every face of the graph and an explicit scaling factor for every vertex. 
Denoting $V(G)$ the number of vertices of a connected graph $G$, $F(G)$ the number of its faces, and taking into account the explicit prefactor $N^{-D}$ the amplitude of $G$ is:
\[
 N^{-D -\frac{D(D-1)}{4}V(G) + F(G) } \equiv N^{ - \omega(G)} \;,
\]
where we define the \emph{degree} of a connected graph to be:
\begin{equation}\label{eq:degdef}
 \omega(G) = D + \frac{D(D-1)}{4} V(G) - F (G) \;, 
\end{equation}
and the degree of a disconnected graph to be the sum of the degrees of its connected components.

In principle the degree $\omega(G)$ of a graph $G$ can be any integer\footnote{The number of vertices of any graph is even as the graphs are bipartite}: positive, zero or negative. A tensor model has a $1/N$ expansion if,
for all the graphs, the degree is non negative and there exist graphs of degree $0$. This ensures that the 
two point function  of the model has a non trivial large $N$ limit. If the two point function has a non trivial large $N$ limit, then the free energy per degree of freedom 
and the expectations of invariant observables will also have non trivial large $N$ limits.

\paragraph{Non  symmetric tensors.}
Let us briefly recall how the $1/N$ expansion works for non symmetric tensors \cite{color,expansion1,expansion3,RTM,uncoloring} 
The Feynman graphs can still be represented as stranded graphs (although more concise representations as colored graphs exist), but in the stranded representation \emph{all the propagators have parallel strands}. 
Using this, one can define global \emph{jackets} which are ribbon graphs embedded in a tensor graph and use them to count the number of
faces. It turns out that the degree is the average genus of the jackets, hence, in particular, non negative.
It follows that the tensor models for non symmetric tensors admit a $1/N$ expansion. This works for the colored models \cite{expansion1,expansion3}, 
the invariant models \cite{RTM,Carrozza:2015adg} and the multi orientable model \cite{expansioin5,Fusy:2014rba}.

The leading order graphs (of degree $0$) are called \cite{critical} \emph{melonic} 
and have a very simple structure. Several equivalent definitions of melonic graphs exist \cite{critical,GurSch}. 
Observe that the zero order contribution to the two point function is represented as a graph with no  vertex, consisting in one (marked) edge closing onto itself. 
We call this graph the \emph{ring} graph. It is the melonic graph of degree $0$.  
  \begin{definition}\label{def:melons}
  (See Fig.~\ref{fig:melon})
  A graph $G$ is called melonic if:
  \begin{itemize}
   \item either $G$ is the ring graph,
   \item or $G$ can be obtained from a melonic graph $G'$ with strictly fewer vertices by inserting two vertices connected by $D$ edges on one of the edges of $G'$.
  \end{itemize}
 \end{definition}
 
\begin{figure}[htb]
 \begin{center}
 \includegraphics[width=8cm]{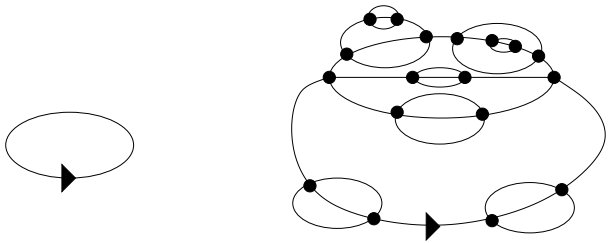}  
 \caption{The (melonic) ring graph and a more complicated melonic graph for $D=3$.} \label{fig:melon}
 \end{center}
 \end{figure}

 For tensor models of rank $D$, the vertices should be interpreted as stranded vertices and the edges as stranded edges with $D$ parallel strands.
 It is easy to check that all the melonic graphs have degree zero. It is slightly less trivial to show \cite{critical} that for non symmetric tensors a graph of degree
 zero must be melonic.

The key to the $1/N$ expansion for non symmetric tensors is the existence of the jackets. For example in $D=3$ the simplest jacket \cite{Geloun:2010nw} (there are another two jackets for $D=3$ \cite{expansion1})
is obtained by deleting the middle stands on all the propagators. As the propagators all have parallel strands, a middle strand will always connect to a middle strands, and deleting the middle strands leaves just the outer strands 
on all the vertices which form a well defined ribbon graph.

 \paragraph{Symmetric tensors.}
All this fails if the tensors have some symmetry properties. Indeed, for symmetric or antisymmetric tensors one \emph{must} allow an arbitrary permutation of the strands along the propagators. In this case 
there  is no notion of ``middle'' and ``outer'' strands, and no way to obtain a ribbon graph embedded in the tensor graph. For instance, in the case presented in Fig.~\ref{fig:example}, 
if one attempts to delete the blue face, one obtains a residual graph which has one propagator with three strands, and another three propagators with only one strand.
\begin{figure}[htb]
 \begin{center}
 \includegraphics[width=3cm]{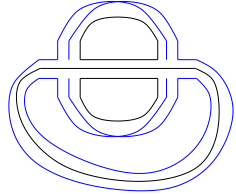}  
 \caption{A graph with no jackets in $D=3$.} \label{fig:example}
 \end{center}
 \end{figure}

For this reason, if one wants to establish the $1/N$ expansions in models involving symmetric or antisymmetric tensors, one needs to find a completely new approach to the problem.
This is what we do in this paper.

\newpage

\section{The $1/N$ expansion for two symmetric tensors}

 \paragraph{Embeddings.}
 
 Using the symmetries of the tensors one can always embed the graphs in the plane in a convenient manner. 
 Any edge connects a vertex $T$ and a vertex $P$. By permuting the half edges around one of the vertices, an edge can always be embedded such that all the strands of the edge are parallel.
 There are $D!$ such embeddings corresponding to a simultaneous permutation of all the half edges on $T$ and all the half edges on $P$. 
 
 In fact, for any combinatorial tree in a graph, one can always chose an embedding such that all the edges in the tree are embedded with parallel strands.
 This is done by untwisting the edges starting from some arbitrary root vertex. This fixes iteratively the order of the half edges around every vertex in the graph.
 We stress that the choice of the combinatorial tree fixes the embedding as a plane tree, that is the remaining half edges of the graph have assigned positions around the vertices of the tree. 
 The loop edges, which pair the remaining half edges together, are then embedded with nontrivial permutations of strands and can cross\footnote{One can eliminate the crossing by embedding 
 in a higher genus surface but the permutations of the strands can not be eliminated.}. 

\paragraph{Rings.} As arbitrary permutations of the strands are allowed one obtains $D!$ ring graphs. 
They have no vertices (indeed they are the only ``bipartite'' stranded graphs with zero vertices) and at most $D$ faces, 
hence, from Eq.~\eqref{eq:degdef}, have non negative degree. We depicted 
in Fig.~\ref{fig:rings} the ring graphs in $D=3$. They have degrees, from left to right $0$, $1$ and $2$.

The ring graph with identity permutation on the strands is the only one which has exactly $D$ faces and degree $0$. We call it the \emph{melonic ring graph}. 
\begin{figure}[htb]
 \begin{center}
 \includegraphics[width=6cm]{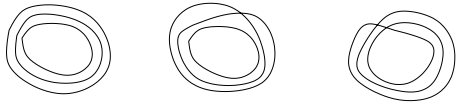}  
 \caption{Ring graphs.} \label{fig:rings}
 \end{center}
 \end{figure}

 \paragraph{Graphs with two vertices.} As the graphs are bipartite, a graph with two vertices has $D+1$ edges which  connect the two vertices (in particular the graph is always connected).
 
\begin{lemma}\label{lem:twovertices}
 A graph $G$ with two vertices has non negative degree.
\end{lemma}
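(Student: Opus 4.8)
The plan is to bound the number of faces of $G$ directly from the internal stranded structure of its two vertices, with no appeal to any global object. Being bipartite and having two vertices, $G$ consists of one $T$-vertex and one $P$-vertex joined by a perfect matching of their $D+1$ legs, that is by $D+1$ propagators. First I would record the internal wiring of one vertex: the $K_{D+1}$ pattern contracts the index $a_{ij}$ of the $i$-th tensor with the index $a_{ji}$ of the $j$-th tensor, so each vertex carries exactly $\binom{D+1}{2}=\tfrac{D(D+1)}{2}$ internal strands, one for every edge of $K_{D+1}$. Since $V(G)=2$, the degree formula~\eqref{eq:degdef} reads $\omega(G)=D+\tfrac{D(D-1)}{2}-F(G)$, and because $D+\tfrac{D(D-1)}{2}=\tfrac{D(D+1)}{2}$ the lemma reduces to the single inequality $F(G)\le \tfrac{D(D+1)}{2}$.

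To prove this inequality I would view each face as a cycle built from two fixed-point-free involutions acting on the $2D(D+1)$ index-slots of $G$: the involution $\alpha$ given by the internal wiring of the two vertices ($a_{ij}\leftrightarrow a_{ji}$ on each side) and the involution $\beta$ given by the propagator matching. The faces of $G$ are precisely the cycles of $\alpha\beta$. The crucial observation is that the slots carry a $T/P$ colouring: $\alpha$ preserves the colour, since it wires slots within a single vertex, while $\beta$ flips it, since every propagator joins a $T$-slot to a $P$-slot. Hence, reading any face cyclically, the internal strands it traverses alternate strictly between $T$-internal and $P$-internal strands, and a face can never close using propagators alone. In particular every face uses at least one internal $T$-strand.

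Since the faces partition the set of all strands, they partition in particular the $\tfrac{D(D+1)}{2}$ internal $T$-strands; as each face consumes at least one of them, $F(G)\le \tfrac{D(D+1)}{2}$, which gives $\omega(G)\ge 0$. The same bookkeeping identifies the extremal case: $\omega(G)=0$ forces every face to contain exactly one internal $T$-strand (and, symmetrically, one internal $P$-strand), with equality realised by the melonic two-vertex graph of degree $0$.

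The step I expect to be delicate is precisely the colour-alternation, because the symmetry of the tensors allows each propagator to carry an arbitrary permutation in $\mathfrak{S}(D)$, so there is no notion of a strand running straight through a propagator. One must rule out that some choice of these permutations produces a face avoiding a whole vertex, or short-circuiting two internal strands without an intervening propagator. Encoding the faces as cycles of $\alpha\beta$ and tracking the $T/P$ colour of the slots is what makes this robust: the colouring is permutation-independent, so the alternation, and with it the bound $F(G)\le\tfrac{D(D+1)}{2}$, holds for every admissible graph.
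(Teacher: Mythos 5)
Your proof is correct and follows essentially the same route as the paper: both reduce the lemma to the inequality $F(G)\le\binom{D+1}{2}$ and obtain it by letting the faces consume the $2\binom{D+1}{2}$ corners of the two vertices. The paper phrases this as ``every face has at least two corners'' (which is exactly your bipartite $T/P$ alternation, stated without the explicit involution bookkeeping), while you count only $T$-corners with at least one per face --- the same bound, with a more formal justification of why no face can close within a single vertex.
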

\begin{proof} A vertex contributes $\binom{D+1}{2}$ corners to the faces. If a graph has two vertices, it has a total of $2 \binom{D+1}{2}$ corners and, as a face has at 
least two corners, we have:
\[
 F(G) \le \binom{D+1}{2} \Rightarrow \omega(G) \ge D +  \frac{D(D - 1)}{4} \; 2 - \binom{D+1}{2} = 0 \;.
\]
\end{proof}

 \begin{figure}[htb]
 \begin{center}
 \includegraphics[width=6cm]{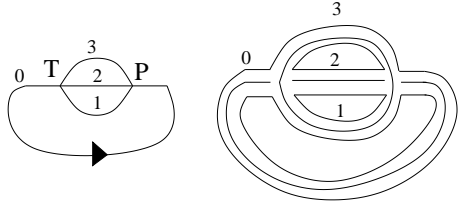}  
 \caption{The melonic graph with two vertices in $D=3$.} \label{fig:fmel}
 \end{center}
 \end{figure}

 \begin{remark}\label{remark}\emph{Graphs with two vertices and degree 0.}
 Let us consider a graph with two vertices and degree zero. This graph can always be embedded in the plane in such a way that all the edges have parallel strands as in Fig.~\ref{fig:fmel}.
  In order to show this, let us introduce some notation. 
  
  We label the two vertices of the graph $T$ and $P$ and we label the marked edge $0$ (with the arrow pointing from $T$ to $P$).
 We can always embed the graph is the plane is such a way that all the strands of the edge $0$ are parallel. 
 This fixes the order of the remaining half edges around both vertices\footnote{Up to an irrelevant simultaneous permutation of the half edges on $T$ and $P$.}. 
 Starting from the edge $0$, we label the half edges $1_T,\dots D_T$ turning counterclockwise around the vertex $T$ and 
 respectively $1_P,\dots D_P$ turning clockwise around the vertex $P$. 
 Every strand going through the vertex $T$  belongs to two half edges of $T$ and 
 forms a corner. We label a strand (and the corresponding corner) by the couple of labels of the half edges:  for instance $0_T1_T$ denotes the strand
common to the half edges $0_T$ and $1_T$ on $T$ and so on. 

Now, every face has two corners. For any $C =1, \dots D$, the corner $0_TC_T$ and $C_P0_P$ are connected along the edge $0$, and they must be connected by a second edge, 
hence necessarily the half edges $C_T$ and $C_P$ are connected by an edge. We label this edge $C$.
Finally, for every $C^1,C^2 \neq 0$, the corner $C^1_TC^2_T$ belongs to a face which goes trough the edges $C^1$ and $C^2$. As every face has length two, this face must close trough one corner of the 
vertex $P$, but the only corner common to the halfedge $C^1_P$ and $C^2_P$ is $C^1_PC^2_P$. Using this iteratively proves that all the edges $C$ are embedded with parallel strands.
 \end{remark}
 
 Because arbitrary permutations of the strands are allowed, we must adapt the definition of melonic graphs to our case.
 \begin{definition}\label{def:melons2}
  A graph $G$ is called melonic if it admits an embedding such that:
  \begin{itemize}
   \item either $G$ is the melonic ring graph,
   \item or $G$ can be obtained from a melonic graph $G'$ with strictly fewer vertices by inserting two vertices connected by $D$ parallel (embedded) edges with parallel strands on one of the edges of $G'$.
  \end{itemize}
 \end{definition}
 
We have so far proven that the graphs with zero or two vertices have non negative degree and they have zero degree if
only if they are melonic in the sense of Definition~\ref{def:melons2}, 
 
 \paragraph{Graphs with more than two vertices.} We turn our attention to general graphs. Let us denote $F_q(G)$ the number of faces with $q$ corners (\emph{i.e.} of length $q$) of a graph $G$. Observe that, as $G$ is bipartite, the
faces can only have even length. The faces of length $2$ play a distinguished role.

 \begin{lemma}\label{lem:longfaces}
 If a nontrivial connected graph $G$ has no face of length two, then it has strictly positive degree.
\end{lemma}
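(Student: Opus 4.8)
The plan is to bound $F(G)$ from above by a corner count and then to feed that bound directly into the definition \eqref{eq:degdef} of the degree; no global structure is needed. First I would record the elementary corner-counting identity. Every vertex of $G$ is a $K_{D+1}$ interaction, and the strands running through it are exactly the $\binom{D+1}{2}$ internal edges of $K_{D+1}$, each of which is one corner of some face. Since every corner belongs to a unique face and a face of length $q$ carries $q$ corners, summing the corners over all vertices gives
\[
 \sum_{q} q\, F_q(G) \;=\; \binom{D+1}{2}\, V(G) \;=\; \frac{D(D+1)}{2}\, V(G)\;,
\]
which holds for any nontrivial graph (one with $V(G)\ge 1$; the ring graphs, having faces but no corners, are excluded).

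Next I would bring in the hypothesis. Because $G$ is bipartite every face has even length, so in the absence of faces of length two every face has length at least four. This is the one place where bipartiteness is essential: it forbids faces of length three, whose coefficient in the degree would be negative for small $D$, so that the smallest surviving faces jump straight to length four. Hence $\sum_q q\, F_q \ge 4\sum_q F_q = 4 F(G)$, and combining with the identity above yields the key bound
\[
 F(G) \;\le\; \frac14 \binom{D+1}{2} V(G) \;=\; \frac{D(D+1)}{8}\, V(G)\;.
\]

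Finally I would substitute this into \eqref{eq:degdef}:
\[
 \omega(G) \;=\; D + \frac{D(D-1)}{4} V(G) - F(G) \;\ge\; D + \left[\frac{D(D-1)}{4} - \frac{D(D+1)}{8}\right] V(G) \;=\; D + \frac{D(D-3)}{8}\, V(G)\;.
\]
Since $D \ge 3$ the correction term is non-negative and $\omega(G) \ge D > 0$, which is the assertion.

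I do not expect a genuine obstacle: the argument is a one-line corner count followed by a substitution. The only points that require care are the corner-counting identity itself — recognising that each interaction vertex contributes precisely $\binom{D+1}{2}$ corners no matter how the loop edges permute the strands — and the structural observation that the faces of length two are exactly the ones carrying a negative coefficient, so that their absence (together with bipartiteness ruling out length three) is what forces strict positivity. This is also what explains why the length-two faces must be isolated and treated separately in the general analysis that follows.
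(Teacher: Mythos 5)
Your proof is correct and is essentially the paper's own argument: the same corner-counting identity $\sum_q qF_q(G)=\binom{D+1}{2}V(G)$, the same use of bipartiteness to force all faces to have length at least four when $F_2(G)=0$, and the same substitution into Eq.~\eqref{eq:degdef} yielding $\omega(G)\ge D+\frac{D(D-3)}{8}V(G)>0$. Your added remarks (why length-three faces must be excluded for small $D$, and why ring graphs are outside the scope) are accurate glosses on points the paper leaves implicit.
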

\begin{proof} A vertex contributes $\frac{D(D+1)}{2}$ corners to the faces, hence:
\[
2F_2(G) + 4F_4(G) + \dots = \frac{D(D+1)}{2} V(G) \;,
\]
and if $F_2(G) =0$ we have:
\[
 \omega(G) \ge D + \frac{D(D-1)}{4}V(G) - \frac{D(D+1)}{8} V(G) = D + \frac{D(D-3)}{8} V(G) >0 \;.
\]
\end{proof} 
 
A more refined statement can be made.
 
\begin{proposition}\label{prop:manycorners}
  Consider a connected graph $G$ with more than two vertices and assume that for every vertex in 
  $G$, at most $\binom{D-1}{2}-1$ corners belong to faces of length two and the remaining corners belong to faces of length at least four.
  Then $G$ has strictly positive degree.
\end{proposition}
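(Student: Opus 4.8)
The plan is to refine the corner count already used in Lemmas~\ref{lem:twovertices} and \ref{lem:longfaces}, now bookkeeping separately the corners that sit in faces of length two and those in longer faces. Each vertex carries exactly $\binom{D+1}{2}=\frac{D(D+1)}{2}$ corners, so the total number of corners is $\frac{D(D+1)}{2}V(G)$, and the identity $\sum_q qF_q(G)=\frac{D(D+1)}{2}V(G)$, split into its $q=2$ part and its $q\ge 4$ part, should give enough control on $F(G)=\sum_q F_q(G)$ to conclude.

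First I would bound $F_2(G)$. For each vertex $v$ let $c_2(v)$ denote the number of its corners lying in faces of length two. Since the two corners of a length-two face sit on the two endpoints of a bigon, which are distinct by bipartiteness, one has $2F_2(G)=\sum_v c_2(v)$. The hypothesis $c_2(v)\le \binom{D-1}{2}-1$ then gives $2F_2(G)\le \left(\binom{D-1}{2}-1\right)V(G)$. Next I would bound the longer faces: all corners not counted by $F_2$ lie in faces of length at least four, so their number is $\frac{D(D+1)}{2}V(G)-2F_2(G)$, and since each such face uses at least four corners,
\[
F_{\ge 4}(G)\le \tfrac14\Big(\tfrac{D(D+1)}{2}V(G)-2F_2(G)\Big).
\]
Adding $F_2(G)$ yields $F(G)\le \tfrac12 F_2(G)+\tfrac14\binom{D+1}{2}V(G)$, and inserting the bound on $F_2(G)$ produces
\[
F(G)\le \frac{V(G)}{4}\left[\binom{D-1}{2}-1+\binom{D+1}{2}\right].
\]

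The crux — and the only place the particular threshold $\binom{D-1}{2}-1$ enters — is the arithmetic identity $\binom{D-1}{2}+\binom{D+1}{2}=D^2-D+1$, which makes the bracket equal to $D(D-1)$ exactly. Substituting into \eqref{eq:degdef} then gives $F(G)\le \frac{D(D-1)}{4}V(G)$ and hence $\omega(G)\ge D>0$, as claimed.

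I do not expect a genuine obstacle beyond verifying this cancellation and the bigon structure underlying $2F_2(G)=\sum_v c_2(v)$; the entire content is a weighted count in which length-two faces are treated as the dangerous case, since they are the most efficient way to manufacture faces (one whole face from only two corners). The threshold $\binom{D-1}{2}-1$ is evidently tuned precisely so that this contribution is capped just below the value at which it could drag the degree down to zero, so the real work is organizing the inequalities so that the tuned bound telescopes cleanly rather than any delicate combinatorial argument.
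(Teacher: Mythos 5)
Your proof is correct and follows essentially the same route as the paper: the same corner-counting identity $\sum_q qF_q(G)=\binom{D+1}{2}V(G)$, the same per-vertex bound $2F_2(G)\le\left[\binom{D-1}{2}-1\right]V(G)$, and the same factor-$\frac{1}{4}$ bound on faces of length at least four, combining into $F(G)\le\frac{D(D-1)}{4}V(G)$ and hence $\omega(G)\ge D>0$. The only difference is organizational: you substitute the $F_2$ bound into the face count before forming the degree, while the paper bounds the degree first and then inserts the $F_2$ bound, but the inequalities and the final cancellation are identical.
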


\begin{proof} We can rewrite the relation between the length of the faces of $G$ and the number of vertices of $G$ as:
\begin{align*}
& 2F_2(G) + 4F_4(G) + 6F_6(G) \dots = \binom{D+1}{2} V(G) \Rightarrow \crcr
& \qquad \qquad 4F_4(G) + 6F_6(G) + \dots = \binom{D+1}{2} V(G)  - 2F_2(G)\;. 
\end{align*}
The total number of faces of a graph is then bounded by:
\begin{align*}
& F (G)= F_2(G) + F_4(G) + F_6 (G)+ \dots \le \crcr
& \qquad \qquad \le F_2(G) + \frac{1}{4} \bigg[  \binom{D+1}{2} V(G) - 2F_2(G)  \bigg] = \frac{1}{4}\binom{D+1}{2} V(G)  +\frac{1}{2} F_2(G) \;, 
\end{align*}
and consequently the degree is bounded by:
\begin{align*}
 \omega(G) = D + \frac{D(D-1)}{4} V (G)- F(G) \ge D + \frac{D(D-3)}{8} V(G) -  \frac{1}{2} F_2(G) \;.
\end{align*}
Now, let us assume that at most $\binom{D-1}{2} -1$ corners of any vertex belong to faces of length two. Then 
 $F_2(G) \le \frac{1}{2} \left[ \binom{D-1}{2} -1 \right] V(G)$, and:
\[
 \omega(G) \ge D +  \frac{D(D-3)}{8} V(G) -    \frac{1}{4} \left[\binom{D-1}{2} -1 \right] V(G) =D  \;.
\]
\end{proof}
 
 If the conditions in the hypothesis of Proposition~\ref{prop:manycorners} are not met, we have the following result.
 \begin{proposition}\label{prop:deletion}
 If a connected graph with more than two vertices $G$ has a vertex such that at least $\binom{D-1}{2}$ corners of the vertex belong to faces of length two, then there exists 
 a (possibly disconnected) graph $G'$ having strictly fewer vertices such that the degree of $G'$ is not larger than the one of $G$, $\omega(G') \le \omega(G)$.
\end{proposition}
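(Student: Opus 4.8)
The plan is to prove the proposition by a \emph{dipole contraction}: I delete the distinguished vertex together with one of its neighbours and reconnect the freed half-edges, then check that the degree cannot increase. First I would translate the hypothesis into local geometry. Since the graph is bipartite a face of length two is a bigon, namely two edges joining $v$ to a common neighbour together with one corner at each of the two endpoints; in particular every length-two face uses exactly one corner at $v$, so the hypothesis says that $v$ is the endpoint of at least $\binom{D-1}{2}$ bigons. A bundle of $k$ edges between $v$ and a neighbour $w$ can carry at most $\binom{k}{2}$ bigons, all of them present exactly when the $k$ edges admit the parallel-strand embedding of Remark~\ref{remark}. Writing $k_w$ for the number of edges from $v$ to $w$, one has $\sum_w k_w\le D+1$ while $\sum_w \binom{k_w}{2}\ge\binom{D-1}{2}$; by convexity the largest bundle must then be big, and I would fix a neighbour $w$ joined to $v$ by a bundle of $k$ parallel edges with $k$ as large as this constraint permits (in the extreme concentrated case, a single neighbour with $k\ge D-1$).

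I would then construct $G'$. Set $e=D+1-k$, the number of half-edges of $v$, and of $w$, not used by the bundle. Delete $v$, $w$ and the $k$ bundle edges, and join the $e$ free half-edges of $v$ to the $e$ free half-edges of $w$ by $e$ new edges, fixing the strand permutation on each new edge so that every strand running \emph{through} the dipole---out of a free leg of $v$, across the bundle, out of a free leg of $w$---is continued uninterrupted. Bipartiteness guarantees that the free legs of $v$ end on $P$-vertices and those of $w$ on $T$-vertices, so the new edges are genuine propagators and $G'$ has $V(G)-2$ vertices. By construction the only faces that are unconditionally destroyed are the $\binom{k}{2}$ internal bigons of the bundle, every through-running face being preserved.

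The bookkeeping then reads $\omega(G')-\omega(G)=-\tfrac{D(D-1)}{2}+\bigl(F(G)-F(G')\bigr)$, so it suffices to show that at most $\binom{D}{2}=\tfrac{D(D-1)}{2}$ faces are lost. The $\binom{k}{2}$ internal bigons are accounted for; the only further discrepancy comes from the ``turn'' corners joining two free half-edges at the same vertex, of which there are $\binom{e}{2}$ at $v$ and $\binom{e}{2}$ at $w$. These strands cannot be copied by the new $T$--$P$ edges and must be re-matched, which alters the number of surviving faces by at most $e(e-1)$; hence $F(G)-F(G')\le\binom{k}{2}+e(e-1)$. A direct check shows $\binom{D+1-e}{2}+e(e-1)\le\binom{D}{2}$ precisely when $(e-1)(3e-2D)\le0$, i.e. for $e\le\tfrac{2D}{3}$, and the convexity estimate above is exactly strong enough to force the contracted bundle to satisfy this bound.

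I expect the main obstacle to be the control of the reconnection, that is, bounding how the re-matching of the turn strands changes the face count. Unlike the clean melonic case $k=D$, where $e=1$, there are no turns, and the contraction preserves the degree exactly, for $e\ge 2$ one must choose the permutations on the new edges carefully and argue that the worst external routing costs at most $e(e-1)$ faces. A secondary subtlety is promoting ``$v$ has many bigons'' to ``$v$ has a large \emph{parallel} bundle to a single neighbour'', since pairwise bigons must be upgraded to a jointly parallel sub-bundle; here the rigidity of the identity-permutation condition, together with the embedding along a spanning tree, is what should make the upgrade possible, and the small values $D=3,4$ will need to be checked by hand because the inequality $e\le\tfrac{2D}{3}$ is then tight and relies on integrality.
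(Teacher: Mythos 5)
Your overall strategy---contract a dipole incident to the distinguished vertex and bound the number of lost faces by $\binom{D}{2}$---is the same as the paper's, but the estimate on which your whole argument rests is false. You claim the re-matching of strands costs at most $e(e-1)$ faces, i.e.\ one per ``turn'' strand, on the grounds that every through-running strand can be preserved. Nothing guarantees this: a traversing strand leaving free leg $a$ of $v$ exits at some free leg $b$ of $w$, and unless $b=\pi(a)$ for the pairing $\pi$ you chose, that strand is broken too. Already in the case $e=2$ (the saturated $(D-1)$--dipole, which is the paper's central case, Proposition~\ref{prop:satD-1dipoledel}), the $2(D-1)$ traversing strands split into $t_{||}$ parallel and $t_{\times}$ crossing couples with $t_{||}+t_{\times}=D-1$, and whichever channel you delete in, all couples of the other type are broken; the cost is at least $\min(t_{\times},t_{||})+1$, which for $D\ge 6$ and $t_{||}\approx t_{\times}$ exceeds your budget $e(e-1)=2$. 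The conclusion survives only because one can \emph{choose the channel adaptively} (parallel when $t_{\times}=0$, otherwise a non-separating channel, giving $\omega(G')\le\omega(G)-D+2+t_{\times}$ or $\omega(G')\le\omega(G)-t_{\times}+1$); this channel analysis is the actual content of the proof, not a refinement you can defer.

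There are two further gaps. First, your bookkeeping identity $\omega(G')-\omega(G)=-\frac{D(D-1)}{2}+\bigl(F(G)-F(G')\bigr)$ is valid only when $G'$ is connected, whereas the proposition explicitly allows $G'$ to be disconnected and the degree of a disconnected graph is the \emph{sum} over components, so each extra component contributes an additional $+D$. When a deletion separates the graph, this $+D$ must be compensated exactly by faces that split in two rather than merge---the paper verifies this for the vertical external face in the separating parallel channel with $t_{\times}=0$---and your proposal never confronts separating deletions at all. Second, the reduction from ``$v$ has $\binom{D-1}{2}$ bigon corners'' to ``$v$ lies in a \emph{saturated} bundle'' is not established: the hypothesis is equally consistent with a $(D-1)$--dipole of deficit $1$ together with a saturated $2$--dipole (possible for every $D\ge 4$, Proposition~\ref{prop:defD-1dipoledel}), with two saturated $3$--dipoles when $D=5$ (Proposition~\ref{prop:lastdel}), or with an unsaturated $D$--dipole (Proposition~\ref{prop:Ddipoledel}). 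In all of these configurations internal corners remain available to external faces, so your key structural fact---every strand entering a free leg either turns at an external corner or traverses directly---fails, and horizontal faces may return to the side they came from. These are not ``small-$D$ checks''; they occur for every $D$ and each requires its own deletion argument.
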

\begin{proof}
 See Section.~\ref{sec:deletions}
\end{proof} 
We emphasize that $G'$ can be disconnected and some of its connected components can be ring graphs. The $1/N$ expansion of the model defined
by Eq.~\ref{eq:action} and Eq.~\eqref{eq:action1} is encoded in the following theorem.
\begin{theorem}\label{thm:main}
 For any connected graph $G$, $\omega(G) \ge 0$. 
\end{theorem}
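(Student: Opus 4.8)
The plan is to prove the statement by strong induction on the number of vertices $V(G)$, using the two propositions above to organise the induction step, with Proposition~\ref{prop:deletion} supplying the crucial reduction move. Everything has in fact been set up so that the theorem becomes a matter of bookkeeping once the deletion of Proposition~\ref{prop:deletion} is available.

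For the base of the induction I would dispose of the small graphs directly. A connected graph with no vertices is one of the ring graphs, which were already shown to have non negative degree; a connected graph with two vertices is covered by Lemma~\ref{lem:twovertices}. This settles $V(G)\in\{0,2\}$ (recall that $V(G)$ is always even since the graphs are bipartite).

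The inductive step treats a connected graph $G$ with $V(G)>2$, under the assumption that $\omega\ge 0$ holds for every graph with strictly fewer vertices. The key observation is that the hypotheses of Proposition~\ref{prop:manycorners} and of Proposition~\ref{prop:deletion} are complementary and together exhaustive: either every vertex of $G$ has at most $\binom{D-1}{2}-1$ corners on faces of length two, or some vertex has at least $\binom{D-1}{2}$ such corners (the clause in Proposition~\ref{prop:manycorners} that the remaining corners lie on faces of length at least four is automatic, as all faces have even length). In the first case Proposition~\ref{prop:manycorners} yields $\omega(G)\ge D>0$ directly, and in particular $\omega(G)\ge 0$; here Lemma~\ref{lem:longfaces} is just the special case $F_2(G)=0$, already subsumed. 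In the second case Proposition~\ref{prop:deletion} produces a graph $G'$ with strictly fewer vertices satisfying $\omega(G')\le\omega(G)$. Since the degree of a disconnected graph is by definition the sum of the degrees of its connected components, and each component of $G'$ has strictly fewer vertices than $G$, applying the induction hypothesis to each component — together with the base cases, which cover any ring or two-vertex component of $G'$ — gives $\omega(G')\ge 0$, whence $\omega(G)\ge\omega(G')\ge 0$. This closes the induction.

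The main obstacle is entirely concentrated in Proposition~\ref{prop:deletion}, whose proof is deferred to Section~\ref{sec:deletions}: constructing an explicit local move that removes a vertex adorned with many length-two faces while not increasing the degree is the heart of the matter, and is precisely where the absence of global jackets must be compensated by controlled local manipulations. The only point requiring mild care in assembling the theorem is that the reduced graph $G'$ may be disconnected and may contain ring graphs among its components; this is harmless, because additivity of the degree over components lets one treat each component separately, and every component carries strictly fewer vertices than $G$, so none escapes either the induction hypothesis or the base cases.
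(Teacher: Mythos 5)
Your proof is correct and follows essentially the same route as the paper: the dichotomy between Proposition~\ref{prop:manycorners} and Proposition~\ref{prop:deletion}, with the deletion move iterated until only two-vertex graphs or rings remain. Your strong induction on $V(G)$ is just a cleaner formalization of the paper's iteration, and your explicit handling of ring components and of the parity remark (faces of length $\neq 2$ automatically have length $\ge 4$) fills in details the paper leaves implicit.
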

\begin{proof} 
If $G$ has no vertex such that at least $\binom{D-1}{2}$ corners of the vertex belong to faces of length two, we conclude by Proposition~\ref{prop:manycorners}.
If $G$ has such a vertex, from Proposition~\ref{prop:deletion} its degree is greater or equal than the one of a (possibly disconnected) graph $G'$ having fewer vertices. 
We iterate on the connected components of $G'$. The end graphs of the iteration either have no vertex such that at least $\binom{D-1}{2}$ of its corners belong to faces of length two,
hence have strictly positive degree from Proposition~\ref{prop:manycorners}, or they have exactly two vertices, hence have non negative degree by Lemma~\ref{lem:twovertices}. 
\end{proof}

The rest of this paper is quite technical.

\newpage

\section{Proof of Proposition~\ref{prop:deletion}}\label{sec:deletions}
  
 \subsection{Dipoles} 
  
Any face of length two is bounded by two edges connecting a pair of vertices. The converse however is not true: not any pair of edges connecting two vertices bounds a face of length two.

\begin{definition}
For $D\ge q\ge 2$, we call a \emph{$q$--dipole} two vertices ($T$ and $P$) connected by exactly $q$ edges. We call the $q$ parallel edges \emph{internal} and the other $2(D+1-q)$ edges hooked to the 
vertices of the dipole \emph{external}.
\end{definition}

A $q$--dipole has:
\begin{itemize}
 \item $qD$ strands of length one coming from the internal edges.
 \item $2  \binom{q}{2}$ \emph{internal} corners formed by the internal edges: $\binom{q}{2} $ on the vertex $T$ and $\binom{q}{2} $ on the vertex $P$.
 \item  $2  q(D+1-q)$ \emph{mixed} corners formed by an internal and an external edge: $q(D+1-q)$ with the external edges incident at the vertex $T$ and $q(D+1-q)$ with the external edges incident at the vertex $P$.
 \item $2 \binom{D+1-q}{2}$ \emph{external} corners formed by couples of external edges: $\binom{D+1-q}{2}$ between the external edges incident at $T$ and $\binom{D+1-q}{2}$ between the external edges incident at $P$.
 \item $F^{\rm int}$ \emph{internal faces} which only pass through \emph{internal} corners.
 \item $2 \binom{D+1-q}{2}$ \emph{vertical external faces} passing through only one external corner each, which connect external edges on the same side of the dipole (either both on $T$ or both on $P$).
 \item $q(D+1-q)$ \emph{horizontal external faces} involving either only mixed corners or mixed and internal corners. The horizontal faces can connect either two external edges on different sides of the dipole
 or two external edges on  same side of the dipole. A horizontal external face which connects a $T$ and a $T$ (or a $P$ and a $P$) has even length, and it has odd length if it connects a $T$ and a $P$.
 \end{itemize}

 As the internal faces have at least length two, $F^{\rm int} \le \binom{q}{2}$.
 If $F^{\rm int} = \binom{q}{2} $ we say that the dipole is \emph{saturated} (and in this case all the internal faces have length exactly two), 
 and we say it has \emph{deficit} $ \binom{q}{2}- F^{\rm int}$ if not.

 \subsection{$D$--dipoles}

 We first assume that the graph $G$ (which by the hypothesis of Proposition~\ref{prop:deletion} has a vertex such that at least $\binom{D-1}{2}$ corners of the vertex belong to faces of length two)
 has a $D$--dipole. We then conclude by the following Proposition.
 
\begin{proposition}\label{prop:Ddipoledel}
If a graph connected graph $G$ has a $D$--dipole (saturated or not), then there exists a connected graph $G'$ having strictly fewer vertices such that the degree 
of $G'$ is not larger than the degree of $G$, $\omega(G')\le \omega(G)$. 
\end{proposition}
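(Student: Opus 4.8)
The plan is to reduce $G$ by a single \emph{dipole deletion}. Write the $D$--dipole as two vertices $T,P$ joined by $D$ internal edges, with the one external edge $e_T$ at $T$ and $e_P$ at $P$. Since the model is bipartite, $e_T$ ends on a $P$--type vertex $X$ and $e_P$ on a $T$--type vertex $Y$, so automatically $X\neq Y$ (and $e_T\neq e_P$, for otherwise $T,P$ would carry all $D+1$ edges and $G$ would have only two vertices). I would define $G'$ by erasing $T,P$ and the $D$ internal edges and joining the free ends of $e_T,e_P$ into a single new propagator $e'$ from $X$ to $Y$. Because the tensors are symmetric, $e'$ may carry \emph{any} permutation of its $D$ strands, and this freedom is exactly what the proof exploits. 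Then $V(G')=V(G)-2$, and since $\tfrac{D(D-1)}{4}\cdot 2=\binom{D}{2}$, the goal $\omega(G')\le\omega(G)$ is equivalent to the face bound
\[
  F(G)-F(G')\le \binom{D}{2}.
\]

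First I would set up a cut--and--glue bookkeeping. Cutting $e_T,e_P$ exposes $2D$ strand ends $t_1,\dots,t_D$ (from $e_T$) and $p_1,\dots,p_D$ (from $e_P$). The dipole reconnects these ends in pairs, a perfect matching $M$, and closes off its $F^{\rm int}$ internal faces; the rest of the graph reconnects the same ends by a matching $R$. Every face of $G$ is internal to the dipole, internal to the rest (a fixed number $F^{\rm rest}$), or crosses the cut, and the crossing faces are precisely the cycles of the two--matching union $M\cup R$. Hence $F(G)=F^{\rm int}+F^{\rm rest}+\mathrm{cyc}(M\cup R)$, while after the move $F(G')=F^{\rm rest}+\mathrm{cyc}(M'\cup R)$, where $M'$ is the \emph{bipartite} matching (each $t_i$ to some $p_j$) realised by the permutation chosen for $e'$. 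The target becomes $F^{\rm int}+\mathrm{cyc}(M\cup R)-\mathrm{cyc}(M'\cup R)\le \binom{D}{2}$.

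Next I would control both error terms by one quantity. Let $d=\binom{D}{2}-F^{\rm int}$ be the deficit and $s$ the number of $t$--$t$ pairs of $M$ (equal to the number of $p$--$p$ pairs, the rest being $t$--$p$). A $t$--$t$ pair is a strand path entering and leaving through $e_T$, so it must turn at $P$ through an \emph{internal} corner of $P$; as distinct such paths use distinct corners, none of which can lie on an internal face, and the internal corners of $P$ outside the internal faces number at most $\binom{D}{2}-F^{\rm int}=d$, this yields $s\le d$. In the saturated case $s=0$, $M$ is already bipartite, and choosing $M'=M$ preserves every crossing face, giving equality $F(G)-F(G')=\binom{D}{2}$.

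The heart of the argument, and the step I expect to be the main obstacle, is to choose $M'$ so that at most $d$ crossing faces are lost. Deleting from $M\cup R$ only the $2s$ bad edges of $M$ (the $t$--$t$ and $p$--$p$ pairs $M_{\rm bad}$) leaves some untouched cycles together with $2s$ open paths whose $4s$ endpoints are the bad ends; let $\Pi$ record which endpoints share a path. Then $\mathrm{cyc}(M\cup R)-\mathrm{cyc}(M'\cup R)=\mathrm{cyc}(\Pi\cup M_{\rm bad})-\mathrm{cyc}(\Pi\cup M')$ depends only on matchings of the $4s$ bad ends. Since $\mathrm{cyc}(\Pi\cup M_{\rm bad})\le 2s$, it suffices to build a bipartite $M'$ with $\mathrm{cyc}(\Pi\cup M')\ge s$: set $M'=\Pi$ on each $t$--$p$ pair of $\Pi$ (a $2$--cycle apiece) and pair the remaining $t$--$t$ pairs of $\Pi$ against $p$--$p$ pairs into $4$--cycles, which a direct count shows gives at least $s$ cycles. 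Hence the loss is at most $2s-s=s\le d$, and
\[
  F(G)-F(G')=F^{\rm int}+(\text{loss})\le\bigl(\tbinom{D}{2}-d\bigr)+s\le\binom{D}{2}.
\]
Finally $G'$ is connected, since any path of $G$ through the dipole is rerouted through the legitimate propagator $e'$ joining the opposite--type vertices $X,Y$. The delicate point throughout is that the gluing freedom of the symmetric propagator must be spent exactly to absorb the internal deficit; making the two bounds $s\le d$ and $(\text{loss})\le s$ line up is where the combinatorial work lies.
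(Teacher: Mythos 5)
Your proposal is correct and takes essentially the same route as the paper: delete the two dipole vertices, fuse the two external edges into a single propagator whose strand permutation you are free to choose, keep every traversing strand, and bound the face loss by the number of returning pairs, which is in turn bounded by the internal-face deficit. Your corner count giving $s\le d$ and your matching/cycle construction losing at most $s$ faces are just a more formalized rendering of the paper's strand-length bound $F^{\rm int}\le \frac{1}{2}\left(D^2-t-4p\right)$ and its pairing-and-reconnection of left and right returning strands, so the two arguments coincide in substance.
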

 
 \begin{proof} 
 
Let us label the external half edges of the $D$-dipole by $0_T$ and $0_P$, and the internal edges $1,2\dots D$.
The dipole has $F^{\rm int}$ internal faces, $D$ horizontal external faces and no vertical external face.
Among the horizontal external faces $t\ge 0$ traverse the dipole from $0_T$ to $0_P$, and $2p=D-t$
do not, hence they return and close on the same external half edge (see Fig.~\ref{fig:aDdipole}, bottom left).

A $D$--dipole can be \emph{deleted} by deleting the two vertices $T$ and $P$ and all the internal faces of the dipole, and reconnecting the external half edges by strands which 
traverse. For this reconnection we keep in place all the traversing strands, and pick any pairing of left and right returning strands, which we break and reconnect by
two traversing strands  (see Fig.~\ref{fig:aDdipole}, bottom right). At the end of this procedure we obtain a graph $G'$ having $V(G') = V(G) -2$ and $F(G') \ge F(G) - F^{\rm int} -p $
because every reconnection of returning external strands deletes at  most a face of $G$.
Every propagator brings $D$ strands of length $1$, therefore we have a total of $D^2$ strands. They divide into internal faces (each of which has length at least $2$) and 
  external strands. The external strands have either odd length, in which case they are traversing, or even length, in which case they are returning. 
  We obtain:
  \[
    F^{\rm int} \le \frac{1}{2}  \left[    D^2  -  t - 4 p  \right] \;.
  \]
It follows that:
   \begin{align*}
  \omega(G') & \le D + \frac{D(D-1)}{4} \bigg[V(G) -2\bigg] - \bigg[ F(G) - F^{\rm int} -p \bigg] \le  \omega(G) + \frac{D-t-2p}{2} \;,
 \end{align*}
  therefore, as  $D = t + 2p$, the degree can not increase with the deletion.
  \begin{figure}[htb]
 \begin{center}
 \includegraphics[width=8cm]{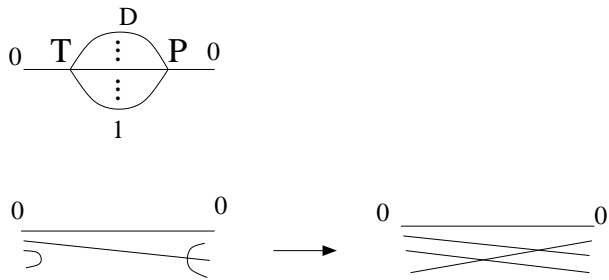}  
 \caption{A $D$--dipole and its deletion.} \label{fig:aDdipole}
 \end{center}
 \end{figure}

\end{proof}

 \subsection{Graphs with no $D$--dipoles}
 
 We now assume that the graph $G$ does not have any $D$--dipole. Then any two vertices in $G$ can be connected by at most $D-1$ edges.
 
  By the hypothesis of Proposition~\ref{prop:deletion}, $G$ has a vertex such that at least $\binom{D-1}{2}$ corners of the vertex belong to faces of length two.
 As a function of the dimension we have several cases:
  \begin{description}
   \item[For $D=3$.] In order for a vertex in $G$ to be incident to at least a face of length $2$, the vertex must belong to a saturated $2$--dipole.
   \item[For $D=4$.] In order for a vertex in $G$ to be incident to at least $3$ faces of length $2$, the vertex must:
              \begin{itemize}
               \item belong to a saturated $3$--dipole,
               \item or belong to a $3$--dipole with deficit $1$ and to another saturated $2$--dipole.
              \end{itemize}
              Indeed, if a vertex belongs to a $3$--dipole with deficit at least $2$, then even adding a saturated $2$--dipole can not add enough faces of length two on the vertex.
              If the graph has no $3$--dipoles then a vertex can belong to at most two saturated $2$--dipoles, hence be incident to at most two faces of length two.
   \item[For $D=5$.]  In order for a vertex in $G$ to be incident to at least $6$ faces of length $2$, the vertex must:
              \begin{itemize}
               \item belong to saturated $4$--dipole,
               \item or belong to a $4$--dipole with deficit $1$ and to another saturated $2$--dipole,
               \item or belong to two saturated $3$--dipoles.
              \end{itemize}
               
              Indeed, if a vertex belongs to a $4$--dipole with deficit at least $2$, then even adding a saturated $2$--dipole can not add enough faces of length two on the vertex.
              If a vertex does not belong to a $4$--dipole, then it can belong to two different $3$--dipoles, but if at least one of them is not saturated,
              then the vertex is incident to at most $5$ faces of degree two. In all other cases, at most $3$ faces of length two can meet at the vertex. 
           
   \item[For $D\ge 6$.] In order for a vertex in $G$ to be incident to at least $\binom{D-1}{2}$ faces of length $2$, the vertex must:
              \begin{itemize}
               \item belong to a saturated $(D-1)$--dipole,
               \item or belong to a $(D-1)$--dipole with deficit $1$ and to another saturated $2$--dipole.
              \end{itemize}
               Indeed, if a vertex belongs to a $(D-1)$--dipole with a larger deficit, then even adding a saturated $2$--dipole can not add enough faces of length two on the vertex.
               If a vertex does not belong to a $(D-1)$--dipole, then the largest number of faces of length two incident at the vertex is attained if the vertex belongs to 
               a saturated $(D-2)$--dipole and a saturated $3$--dipole, and this number is:
               \[
                 \binom{D-2}{2} + 3  \le \binom{D-1}{2} -1 \;.
               \]
  \end{description}

  The situation is presented in Fig.~\ref{fig:manycases}.
\begin{figure}[htb]
 \begin{center}
 \includegraphics[width=12cm]{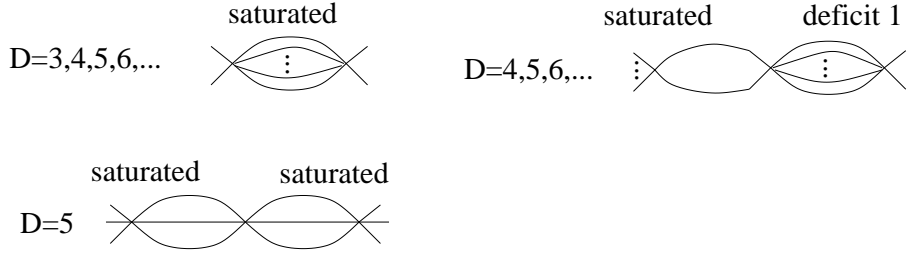}  
 \caption{All the cases of vertices incident to at least $\binom{D-1}{2}$ faces of length two in a graph with no $D$--dipoles.} \label{fig:manycases}
 \end{center}
 \end{figure}

\subsubsection{Saturated $(D-1)$--dipole}

If $G$ has a saturated $(D-1)$--dipole we conclude by the following Proposition.
 
\begin{proposition}\label{prop:satD-1dipoledel}
If a connected graph $G$ has a saturated $(D-1)$--dipole, then there exists a (possibly disconnected) graph $G'$ having strictly fewer vertices such that the degree 
of $G'$ is not larger than the degree of $G$, $\omega(G')\le \omega(G)$. 
\end{proposition}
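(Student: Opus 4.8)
The plan is to delete the saturated $(D-1)$--dipole and reconnect its external legs, in the spirit of the $D$--dipole deletion of Proposition~\ref{prop:Ddipoledel}, but now accounting for the fact that each of the two dipole vertices carries \emph{two} external half edges instead of one. First I would fix the local structure. Because the dipole is saturated, all $\binom{D-1}{2}$ internal faces have length two and the internal edges carry parallel strands, so the deletion destroys exactly $\binom{D-1}{2}$ faces. Labelling the external half edges $0_T,0'_T$ on $T$ and $0_P,0'_P$ on $P$, each internal edge $i$ routes its two remaining (mixed) strands either ``straight'' ($0_T\to 0_P$, $0'_T\to 0'_P$) or ``crossed'' ($0_T\to 0'_P$, $0'_T\to 0_P$); write $S,C$ for the corresponding sets of internal edges, $|S|+|C|=D-1$. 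In addition there are exactly two vertical external strands, through the external corner of $T$ and of $P$. This data is precisely the matching $\mu_{\mathrm{box}}$ by which the dipole glues its $4D$ external strand ends to one another.

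Next I would perform the deletion: remove $T$, $P$ and the $D-1$ internal edges and reconnect the four external edges by two new edges. Bipartiteness forces a $T$--side leg to be paired with a $P$--side leg, so there are exactly two admissible reconnections, $\{0_T\!-\!0_P,\,0'_T\!-\!0'_P\}$ and $\{0_T\!-\!0'_P,\,0'_T\!-\!0_P\}$, and on each new edge the strand permutation is free. With $V(G')=V(G)-2$, the inequality $\omega(G')\le\omega(G)$ is equivalent to $F(G)-F(G')\le \tfrac{D(D-1)}{2}-(k-1)D$, where $k$ is the number of connected components of $G'$. Since the internal faces already account for a loss of $\binom{D-1}{2}$, the whole statement reduces to the single inequality
\[
 \big(c_{\mathrm{box}}-c_{\mathrm{new}}\big)+(k-1)D \;\le\; D-1 \;,
\]
where $c_{\mathrm{box}}$ and $c_{\mathrm{new}}$ count the faces threading the dipole before and after the move, i.e. the cycles of $\mu_{\mathrm{box}}\cup\rho$ and $\mu_{\mathrm{new}}\cup\rho$, with $\rho$ the fixed strand connectivity of the rest of the graph.

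I would then compare the two admissible reconnections with $\mu_{\mathrm{box}}$. With the identity permutation, one pairing agrees with $\mu_{\mathrm{box}}$ on the straight edges and the other on the crossed edges, while both must re-pair the two vertical strands; hence they differ from $\mu_{\mathrm{box}}$ by $|C|+1$ and $|S|+1$ elementary swaps respectively. Since a single swap of two matched pairs changes the number of cycles by at most one, choosing the cheaper reconnection gives $c_{\mathrm{box}}-c_{\mathrm{new}}\le \min(|S|,|C|)+1\le \tfrac{D+1}{2}$. When $G'$ stays connected ($k=1$) this is $\le D-1$ for every $D\ge 3$, and the displayed inequality holds.

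The main obstacle is the interplay between face preservation and connectivity: the cheap reconnection may disconnect $G'$, and each extra component costs $+D$ in the summed degree. The crucial point to verify is that these two effects compensate. When a reconnection localizes strands inside a part of the graph that the dipole previously bridged, the vertical and crossed swaps \emph{split} cycles rather than merge them, so $c_{\mathrm{new}}$ exceeds the naive estimate by just enough to absorb the $(k-1)D$ penalty — one checks on small instances that the balance is exactly saturated. I therefore expect the hard part to be a systematic case analysis of how the admissible $\mu_{\mathrm{new}}$ meets the component structure of $G\setminus\{T,P\}$, organized by which external legs fall into which piece. Here a parity constraint helps to bound the number of cases: a single edge carrying $D$ strands can separate the graph only when $D$ is even, so for $D$ odd every piece attaches to the dipole through at least two external legs and $G\setminus\{T,P\}$ has at most two components. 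Finally, if the marked edge happens to be internal to the dipole it is simply carried along one of the new edges and does not affect the count.
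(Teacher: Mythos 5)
Your setup is sound and matches the paper's: saturation forces every horizontal external strand of the dipole to connect a mixed corner on $T$ directly to a mixed corner on $P$, the edge pairs split into straight/crossed (the paper's $t_{||}$ and $t_{\times}$), there are exactly two reconnection channels, each cut-and-reglue loses at most one face, and your reduction of $\omega(G')\le\omega(G)$ to $(c_{\mathrm{box}}-c_{\mathrm{new}})+(k-1)D\le D-1$ is algebraically correct. Your treatment of the connected case ($k=1$) via the cheaper reconnection, $\min(|S|,|C|)+1\le \tfrac{D+1}{2}\le D-1$ for $D\ge 3$, is also fine.

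The genuine gap is the separating case, which you explicitly defer (``one checks on small instances that the balance is exactly saturated \dots I therefore expect the hard part to be a systematic case analysis''). That case is the entire content of the proposition --- it is why the statement says ``possibly disconnected'', and it is the one place where the inequality is saturated, so no slack in swap-counting can absorb the $(k-1)D$ penalty; it must be proven, not checked on instances. The paper closes it in two steps that your proposal lacks. First, at most one of the two channels can be separating (a fact you never establish, and which your ``cheapest channel'' strategy needs: the cheap channel may be the separating one, in which case you must switch to the other channel and verify its cost $D-\min(|S|,|C|)\le D-1$, which fails precisely when $\min(|S|,|C|)=0$). Second, in the only irreducible case --- $t_{\times}=0$ with the parallel channel separating --- the paper shows that the two vertical external strands of the dipole necessarily lie on a \emph{single} face of $G$: a face leaving through $0_T$ can return to the component of $1_T,1_P$ only through a vertical corner, since all horizontal passages traverse parallel from the $0$ side to the $0$ side and each corner is traversed by exactly one face. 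Hence the single vertical reconnection \emph{splits} that face, $F(G'_0)+F(G'_1)=F(G)-\binom{D-1}{2}+1$, giving the exact identity $\omega(G'_0)+\omega(G'_1)=\omega(G)$ with nothing to spare. This face-tracing argument (or an equivalent) is the missing core of your proof. Separately, your parity remark is false: in these bipartite $(D+1)$-regular graphs a single edge is never a bridge for \emph{any} $D$ (counting edges from the two sides of the bipartition in a would-be component forces $(D+1)(n_T-n_P)=1$), so every piece of $G\setminus\{T,P\}$ attaches by at least two legs regardless of parity; this error is tangential, but the case analysis you postpone should not be organized around it.
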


\begin{proof}
Let us label the external half edges of the $(D-1)$--dipole by $0_T$, $1_T$ and $0_P,1_P$. 
As the dipole is saturated, all the $2\binom{D-1}{2}$ internal corners are paired into $\binom{D-1}{2}$ internal faces of length $2$. The dipole has 
two vertical external faces going through one external corner each. The horizontal external faces can  not go through any internal corner
(as none is available), hence always connect two mixed corners. This can also be seen as follows: the edges bring  a 
total of $D(D-1)$ strands of length $1$, out of which $(D-1)(D-2)$ belong to the internal faces. As $4(D-1)$ mixed corners must 
be paired and there are exactly $2(D-1)$ remaining strands, it follows that all the horizontal external faces of the dipole have 
length $1$, and they connect one of the half edges $0_T$ or $1_T$ with one of the half edges $0_P$ and $1_P$ in one step.

After at most a relabeling (if needed) of the the half edges $O_P$ and $1_P$, two of the horizontal external faces of the dipole 
will be $0_T D_T D_P 0_P$ and $1_T D_T D_P1_P$. In Fig.~\ref{fig:bDdipole} we presented a saturated $(D-1)$--dipole and, at the bottom left, its external faces.
 \begin{figure}[htb]
 \begin{center}
 \includegraphics[width=8cm]{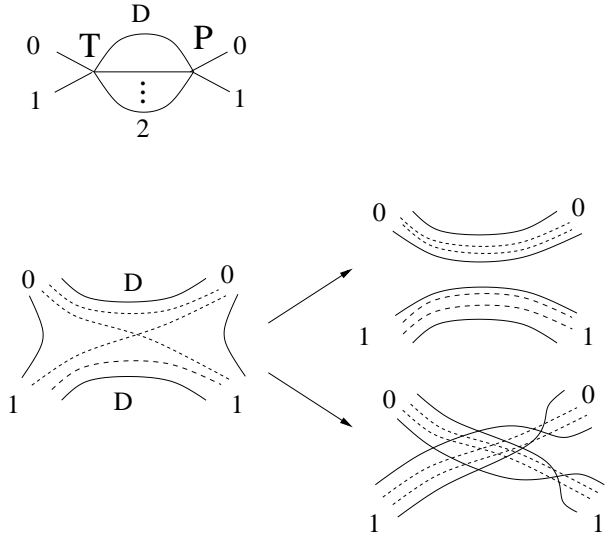}  
 \caption{A saturated $(D-1)$--dipole and its deletion.} \label{fig:bDdipole}
 \end{center}
 \end{figure}

The vertical external faces connect $0_T$ with $1_T$ and $0_P$ with $1_P$. Any edge $C = 2,\dots D-1$ of the dipole has exactly $D-2$ strands involved in the internal faces 
and $2$ strands forming two horizontal external faces. The horizontal external faces always go from left ($T$) to right ($P$).
If $0_TC_T$ is connected to $C_P 0_P$ (respectively $C_P1_P$) by a strand of the edge $C$ then, as $C$ has $D$ strands but $D-2$ of them are involved in the internal faces,
$1_T C_T$ must connect to $C_P 1_P$ (respectively $C_p0_P)$. 

If we are in the case $ 0_T C_T C_P 0_P $ and $1_T C_T C_P 1_P$ we say that the couple of faces is \emph{parallel} (as they are parallel with the two horizontal external faces
$0_T D_T D_P 0_P$ and $1_T D_T D_P1_P$) and we denote the number of such couples of external faces $t_{||}$ (including the couple of faces $0_T D_T D_P 0_P$ and $1_T D_T D_P1_P$). 
If not, we say that the couple of faces is crossing, and we denote the number of couples of faces which are crossing by $t_{\times}$.
We have:
\[1\le t_{||}\le D-1 \;, \qquad 0\le t_{\times}\le D-2\;, \qquad t_{||}+t_{\times} = D-1 \;.\]

A saturated $(D-1)$--dipole can be \emph{deleted} by deleting the two vertices $T$ and $P$ and all the internal faces of the dipole and reconnecting the external 
half edges to form two new edges. This can be done in two ways:
\begin{description}
 \item[The parallel channel.] In the parallel channel one joins the half edge $0_T$ with the half edge $0_P$ and the half edge $1_T$ with the half edge $1_P$.
    The parallel horizontal external faces are left untouched, but the crossing horizontal external faces (and the two vertical external faces) are opened and reconnected as depicted in Fig.~\ref{fig:bDdipole}.
    One performs $t_{\times} +1$ reconnections in this case.
  \item[The crossing channel.] In the crossing channel one joins the half edge $0_T$ with the half edge $1_P$ and the half edge $1_T$ with the half edge $0_P$.
    The crossing horizontal external faces are left untouched, but the parallel horizontal external faces (and the two vertical external faces) are opened and reconnected  as depicted in Fig.~\ref{fig:bDdipole}.
    One performs $t_{||}+1$ reconnections in this case.
\end{description}

Each reconnection can at most decrease the number of faces by $1$ (if the two faces of $G$ involved in the reconnection are different). 
However, if the two strands which are broken and reconnected belong to the same face of $G$, the reconnection can create a face and increase the number of faces by $1$ or it can leave 
it unchanged.
The graph $G'$ can have two connected components in which case we say that the deletion is \emph{separating}. We have several cases:
\begin{description}
  \item[The case $t_{\times}=0$.] If $t_{\times}=0$, we perform a deletion in the parallel channel. Two sub cases arise:
          \begin{itemize}
           \item{\it Separating deletion.} If the deletion is separating, then $G$ splits into two connected components: $G'_0$ containing the new edge $0$ and $G'_1$ 
            containing the new edge $1$. In this case, following the vertical external face of the dipole $1_T0_T$ in $G$ as it enters the (future) connected component $G'_0$, we notice that 
            it can exit $G'_0$ only through the second vertical external face of the dipole $0_P1_P$, hence the breaking and regluing of the vertical external strands will split a face of $G$ into two 
            different faces in $G' = G'_0 \cup G'_1$ and $F(G'_0)+ F(G'_1) = F(G) - \binom{D-1}{2} + 1$, thus:
            \begin{align*}
             \omega(G'_0) + \omega(G'_1)& = 2D + \frac{D(D-1)}{4} \bigg[ V(G)-2\bigg] - \bigg[ F(G) - \binom{D-1}{2} + 1 \bigg]  \crcr 
                   & \qquad =\omega(G)\;.
            \end{align*}            
           \item{\it Non separating deletion} If the deletion is non separating, then $ F(G') \ge F(G) - \binom{D-1}{2} -1  $ and:
            \begin{align*}
             \omega(G') &\le D + \frac{D(D-1)}{4} \bigg[ V(G)-2\bigg] - \bigg[ F(G) - \binom{D-1}{2} - 1 \bigg] \crcr
                  & \qquad =\omega(G) -D+2 <\omega(G)\;.
            \end{align*}           
          \end{itemize}
 \item[The case $t_{\times}\ge 1$.] If $t_{\times}\ge 1$, we can perform a deletion in either of the two channels. Observe that the deletion can not 
   be separating in both channels at the same time. Choosing a channel which is not separating we have:
    \begin{align*}
      \text{parallel channel:} \quad      \omega(G') & \le  D + \frac{D(D-1)}{4} \bigg[ V(G)-2\bigg] - \bigg[ F(G) - \binom{D-1}{2} - t_{\times}-1 \bigg] \crcr
            & = \omega(G) - D + 2 + t_{\times}  \le  \omega(G) \;, \crcr
      \text{crossing channel:} \quad      \omega(G') & \le  D + \frac{D(D-1)}{4} \bigg[ V(G)-2\bigg] - \bigg[ F(G) - \binom{D-1}{2} - t_{||}-1 \bigg] \crcr
            & = \omega(G) - D + 2 + t_{||} =\omega(G) -t_{\times} +1 \le \omega(G) \;.
    \end{align*}
\end{description}
\end{proof}

\subsubsection{$(D-1)$--dipole with deficit $1$ and saturated $2$--dipole}
 
If $G$ has a vertex belonging to a $(D-1)$--dipole with deficit $1$ and to a saturated $2$--dipole we conclude by the following Proposition.
 
\begin{proposition}\label{prop:defD-1dipoledel}
For $D\ge 4$, if a connected graph $G$ has a vertex belonging to a $(D-1)$--dipole with deficit $1$ and to a saturated $2$--dipole, then there exists a connected
graph $G'$ having strictly fewer vertices such that the degree of $G'$ is strictly smaller than the degree of $G$, $\omega(G') < \omega(G)$. 
\end{proposition}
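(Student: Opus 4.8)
The plan is to delete the $(D-1)$--dipole just as in Proposition~\ref{prop:satD-1dipoledel}, and to use the attached saturated $2$--dipole both to keep the resulting graph connected and to turn the resulting estimate into a strict inequality. Write $T$ for the shared vertex, $P$ for the second vertex of the $(D-1)$--dipole, and $Q$ for the second vertex of the saturated $2$--dipole, labelling the external half edges of the $(D-1)$--dipole $0_T,1_T$ and $0_P,1_P$ as before. Since $T$ has $D+1$ half edges, $D-1$ of which feed the $(D-1)$--dipole, its two remaining half edges $0_T,1_T$ are exactly the legs of the $2$--dipole, so both of them run to the \emph{same} vertex $Q$, and $Q\neq P$ because $G$ has more than two vertices. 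The first step is to read off the local face structure forced by the hypotheses. The corner $0_T1_T$ carries the short internal face of the saturated $2$--dipole, which coincides with the vertical external face of the $(D-1)$--dipole on the $T$ side. Counting the corners of $T$ lying on faces of length two then pins the configuration down: the requirement of $\binom{D-1}{2}$ short corners at $T$, of which the $2$--dipole supplies only one, forces all $\binom{D-1}{2}-1$ internal faces of the deficit $1$ dipole to have length two, with exactly one internal corner ``leaking'' into a longer external face on each of $T$ and $P$ (a length $4$ internal face would instead leave only $\binom{D-1}{2}-1$ short corners at $T$ and violate the hypothesis).

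Next I would perform the deletion: remove $T$ and $P$, erase the $\binom{D-1}{2}-1$ internal faces, and reconnect the external half edges either in the parallel channel ($0_T$ with $0_P$ and $1_T$ with $1_P$) or in the crossing channel ($0_T$ with $1_P$ and $1_T$ with $0_P$), exactly as in Proposition~\ref{prop:satD-1dipoledel}. The decisive consequence of the $2$--dipole is that \emph{neither} channel can be separating: after the deletion each of the two new edges is incident to $Q$ through the former legs $0_T,1_T$, so $Q$ bridges them and $G'$ stays connected. Hence the constant in Eq.~\eqref{eq:degdef} remains $D$ and does not double to $2D$, which is precisely the mechanism that produced equality rather than a strict inequality in the separating subcase of Proposition~\ref{prop:satD-1dipoledel}.

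With $V(G')=V(G)-2$ and one fewer internal face destroyed than in the saturated case, Eq.~\eqref{eq:degdef} gives
\[
 \omega(G')=\omega(G)-D+\Delta \;,
\]
where $\Delta$ is the net number of faces lost in the reconnection. Writing $t_{||}$ and $t_{\times}$ for the numbers of parallel and crossing couples of horizontal external faces, so that $t_{||}+t_{\times}=D-1$, and choosing the cheaper of the two channels, the saturated part of $\Delta$ is bounded by $\min(t_{||},t_{\times})+1\le (D-1)/2+1$, which is already strictly less than $D$. The remaining work, and the main obstacle, is the contribution of the two leaked corners. Unlike in the saturated case their faces are longer: the leaked face on the $T$ side escapes through a mixed corner to $0_T$ or $1_T$, i.e. it threads through the saturated $2$--dipole into $Q$. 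I expect the crux to be showing that, in the channel chosen above, these leaked faces are \emph{reconnected} through $Q$ rather than destroyed, so that they do not inflate $\Delta$; this is what should keep $\Delta\le D-1$ even for the small ranks $D=4,5$, where a crude bound would be too weak. As in the earlier propositions, verifying this amounts to chasing a few strands through the internal structure of the deficit $1$ dipole and through the $2$--dipole, and then reading off $\omega(G')<\omega(G)$.
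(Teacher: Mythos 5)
Your overall strategy matches the paper's: delete the $(D-1)$--dipole in one of the two channels, use the saturated $2$--dipole (both legs $0_T,1_T$ ending on the same vertex $Q$) to rule out separating deletions, and exploit the deficit to gain the extra face needed for strictness. Your bookkeeping $\omega(G')=\omega(G)-D+\Delta$ is also correct. However, there are two genuine problems. First, you discard the configuration in which the deficit is realized by an internal face of length $4$ by appealing to ``the requirement of $\binom{D-1}{2}$ short corners at $T$'' --- but that requirement is not among the hypotheses of this proposition (it belongs to Proposition~\ref{prop:deletion}, where this proposition is invoked); as stated, the length-$4$ case must be treated. Ironically, it is the easiest case: the external faces then have exactly the saturated structure, the deletion of Proposition~\ref{prop:satD-1dipoledel} applies verbatim, and the deficit makes the inequality strict.

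Second, and more seriously, the case you do focus on --- the two unpaired internal corners lying on external faces --- is exactly where your argument stops being a proof. You write that you ``expect the crux to be showing that these leaked faces are reconnected through $Q$ rather than destroyed, so that they do not inflate $\Delta$'', and defer the verification to ``chasing a few strands''. That verification is the heart of the matter, it is left undone, and the mechanism you hope for is not the one that works. When each unpaired corner lies on its own returning face (one face entering and exiting on the $T$ side, one on the $P$ side), those faces \emph{must} be cut --- they cannot survive in any channel, since both their ends sit on the same side of the dipole --- and a cut can destroy a face. The paper's resolution is different: it first breaks the two returning faces and reglues them into two traversing strands, choosing the regluing so that the result pairs with the remaining length-one external faces into parallel or crossing couples; this preliminary step costs at most one face, and that cost is exactly absorbed by the deficit (one fewer internal face is deleted). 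After this rerouting, the saturated-case channel analysis applies, the deletion is non-separating thanks to $Q$, and one reads off
\[
\omega(G')\le \min\Bigl\{\,\omega(G)-D+2+t_{\times}\,,\;\omega(G)-t_{\times}+1\,\Bigr\}<\omega(G)
\qquad\text{for } D\ge 4\;.
\]
Without this (or an equivalent) explicit handling of the returning faces, your claimed bound $\Delta\le D-1$ is unsupported. You also silently merge into this discussion the subcase where both unpaired corners lie on a single traversing face of length three, which again has the saturated external structure and reduces to the earlier analysis.
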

\begin{proof}
We use the same notation as in Proposition~\ref{prop:satD-1dipoledel} and assume that the vertex belonging to a saturated $2$--dipole is $T$:
the only way for $T$ to have $\binom{D-1}{2}$ corners belonging to faces of length two is for the corner $0_T1_T$ to belong to such a face.

Let us detail the structure of the $(D-1)$--dipole. As it has exactly $\binom{D-1}{2} -1$ faces of length two, several cases are possible:
\begin{description}
 \item[Short horizontal faces.] $\binom{D-1}{2} - 2$ internal corners on each vertex are paired in faces of length two, and the remaining $4$ internal corners ($2$ on each vertex) all belong to the same internal face of length $4$. The 
       analysis of Proposition~\ref{prop:satD-1dipoledel} goes through: the horizontal external faces can not go through any internal corner (as none is available), hence always connect two mixed corners directly. Every 
       edge $C$ has two mixed corners at each end of the edge $0_TC_T$ and $1_TC_T$ respectively $C_P0_P$ and $C_P1_P$, which are connected either parallel or crossing.
       The external faces of the dipole have the same structure as in the case of the saturated $(D-1)$--dipole, reproduced in Fig.~\ref{fig:cDdipole}.
\begin{figure}[htb]
 \begin{center}
 \includegraphics[width=3cm]{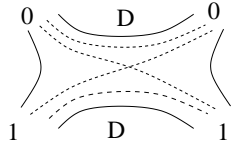}  
 \caption{External faces of a $(D-1)$--dipole with deficit $1$ and short horizontal external faces.} \label{fig:cDdipole}
 \end{center}
 \end{figure}
 \item[Long horizontal faces.] Another possibility is to have
       $\binom{D-1}{2} - 1$ internal corners on each vertex paired in faces of length $2$. 
        The two remaining internal corners are $C^1_TC^2_T$ and $C^1_P C^2_P$ for some $C^1 \neq C^2, C^1,C^2\in \{2,\dots D \}$. 
        As $D\ge 4$, one can always chose $C^1,C^2\neq D$ and embed the edge $D$ with parallel strands as before.
        The edges $C^1$ and $C^2$ only have $D-3$ strands belonging to internal faces, and $3$ strands belonging to the external faces. The edges $C\neq C^1,C^2$ have, as before, $D-2$
        strands belonging to internal faces and $2$ strands belonging to external faces. The external faces passing through the edges $C \neq C^1,C^2$ have as before length $1$
        (and can be either parallel or crossing).
         
         There are several cases:
          \begin{itemize}
           \item both  $C^1_TC^2_T$ and $ C^2_P C^1_P$ belong to the same horizontal face of length three. In this case again the structure of the external faces is identical to the one of the saturated $(D-1)$--dipole 
              reproduced in Fig.~\ref{fig:cDdipole}.
           \item each one of the two corners  $C^1_TC^2_T$ and $C^2_PC^1_P $ belongs belongs to a different horizontal face of length two. 
                  Then one horizontal face originating on the vertex $T$ must go back to the vertex $T$ (passing through the 
                  corner $ C^2_P C^1_P$), and one face originating on the vertex $P$ must go back to the vertex $P$ (passing through the 
                  corner $C^1_T C^2_T$). The other external faces are as before and have length $1$. All the possibilities of connections of
                  external faces are presented in Fig.~\ref{fig:vDdipole}.
\begin{figure}[htb]
 \begin{center}
 \includegraphics[width=14cm]{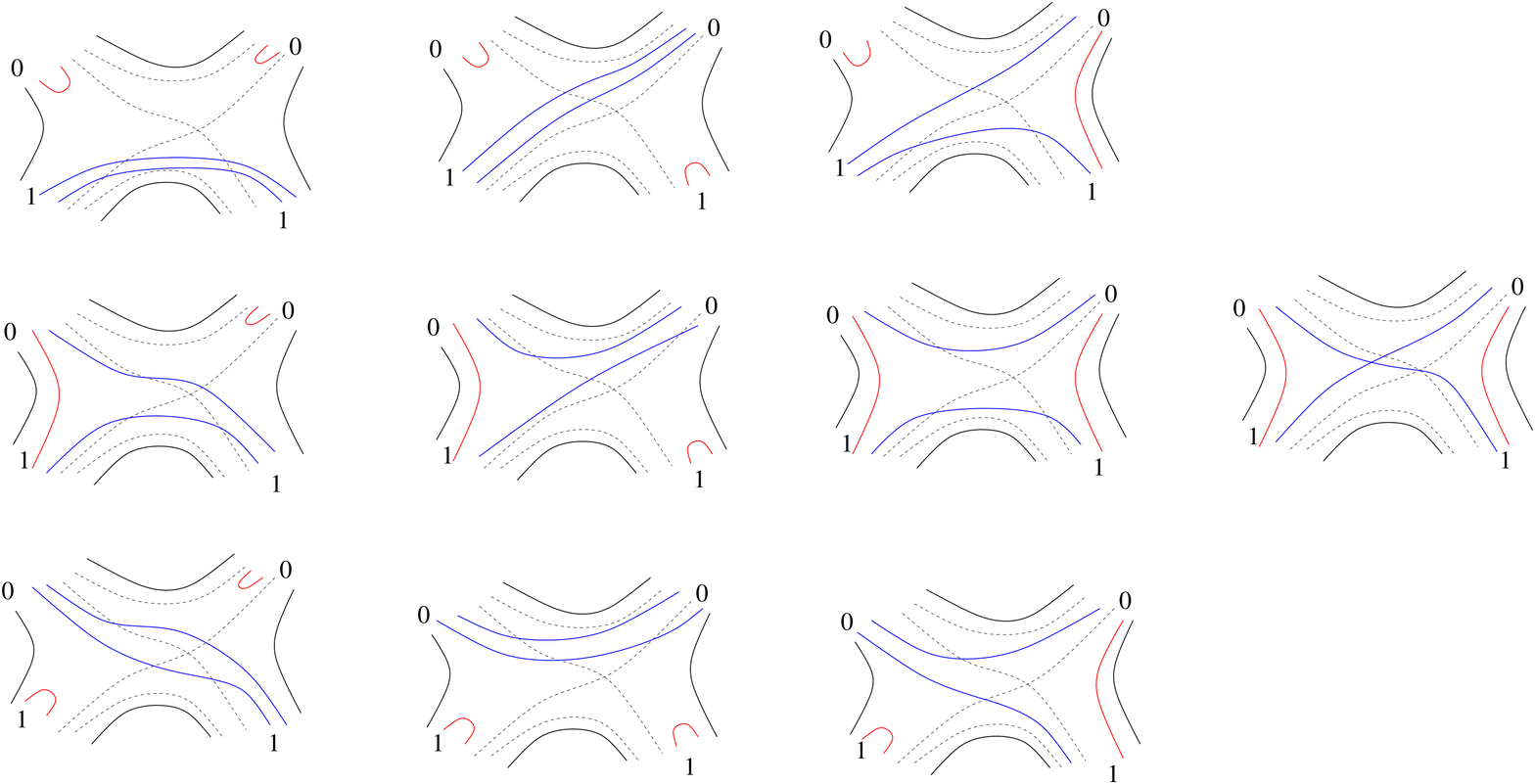}  
 \caption{External faces for $(D-1)$--dipoles with deficit $1$. We represented in red (respectively blue) the external faces with length $2$ (respectively $1$)
 involving the edges $C^1$ and $C^2$.} \label{fig:vDdipole}
 \end{center}
 \end{figure}
          \end{itemize}
\end{description}

As long as the external faces have the same structure as in the case of the saturated $(D-1)$--dipole, we perform the deletions as in Proposition\ref{prop:satD-1dipoledel}. The only difference is that one less
internal face is deleted, hence in all cases the degree strictly decreases:
\[
 \omega(G')< \omega(G) \;.
\]

It remains to perform the deletions in the cases presented in Fig.~\ref{fig:vDdipole}. Observe that we can start by breaking the two red external faces and reconnect them the other way around,
such that each of them goes from $T$ to $P$ making sure that the new red faces can be paired with the blue faces into either parallel or crossing pairs (each pair has a red and a blue face).
This can decrease the number of faces in $G$ by $1$, but the $D$--dipole has deficit $1$. Observe that the deletion can not be separating in any channel (as both edges $0_T$ and $1_T$
connect on the same vertex, the second vertex of the saturated $2$--dipole). It follows that, in the notation of Proposition\ref{prop:satD-1dipoledel}:
\begin{align*}
 \omega(G') \le
  \begin{cases}
   \omega(G)-D+2 \;, \qquad & t_{\times} = 0\\
 \min\bigg\{ \omega(G) -D+2 + t_{\times} \; , \;\; \omega(G) -t_{\times} +1  \bigg\}   \qquad  & 1\le t_{\times} \le D-2 
  \end{cases} < \omega(G) \;,
\end{align*}
for all $D\ge 4$.
\end{proof}

\subsubsection{Two saturated $3$--dipoles in $D=5$}
 
This concerns only the case $D=5$. If $G$ has a vertex belonging to two saturated $3$--dipoles, we conclude by the following Proposition.
 
\begin{proposition}\label{prop:lastdel}
For $D=5$, if a connected graph $G$ has a vertex belonging to two saturated $3$--dipoles, then there exists a connected graph $G'$ having strictly fewer 
vertices such that the degree of $G'$ is strictly smaller than the degree of $G$, $\omega(G')  < \omega(G)$. 
\end{proposition}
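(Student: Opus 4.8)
The plan is to recognize the configuration as the deletion of a single saturated $3$--dipole whose local environment is rigidly constrained by the second dipole, and to show that this deletion strictly lowers the degree. Let $T$ be the shared vertex (say of type $T$), joined to $P_1$ by the edges $1,2,3$ (dipole~1, saturated) and to $P_2$ by the edges $4,5,6$ (dipole~2, saturated); since $T$ has exactly $D+1=6$ half edges, these two dipoles exhaust $T$. I regard dipole~1 as the object to be deleted. It is a saturated $3$--dipole whose six \emph{external} edges are $4,5,6$ at $T$ and three further edges $\alpha,\beta,\gamma$ at $P_1$. The crucial extra datum, not available in Proposition~\ref{prop:satD-1dipoledel} or Proposition~\ref{prop:defD-1dipoledel}, is that the three $T$--side external edges $4,5,6$ themselves form a \emph{saturated} $3$--dipole with $P_2$.

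First I would read off the face structure of dipole~1 from the two saturation conditions, exactly in the spirit of Proposition~\ref{prop:satD-1dipoledel}. Saturation of dipole~1 means all $2\binom{3}{2}$ internal corners are paired into the three internal faces of length two, so no internal corner is available to a horizontal external face; hence every one of the $q(D+1-q)=9$ horizontal external faces must use only mixed corners and is therefore a length--two \emph{traversing} face, running from a $T$--side external edge through one internal edge to a $P_1$--side external edge. The three $T$--side vertical external faces of dipole~1 coincide with the internal faces of dipole~2, hence are also short (length two), while the three $P_1$--side vertical external faces are generic. Thus the complete list of faces meeting dipole~1 is: three internal faces, nine length--two traversing horizontal faces, and $3+3$ short vertical faces.

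Next I would define the deletion: remove $T,P_1$ and the edges $1,2,3$, and reconnect the six external half edges $4,5,6,\alpha,\beta,\gamma$ into three traversing edges, with a free choice of which $T$--side edge is paired with which $P_1$--side edge and of the strand bijection inside each new edge (this is the analogue of the parallel/crossing channels). With $D=5$ one has $\tfrac{D(D-1)}{4}=5$, so from \eqref{eq:degdef} $\omega(G')-\omega(G) = -10 + \big[F(G)-F(G')\big]$, and it suffices to destroy at most nine faces. In the \emph{aligned} case, where the nine horizontal arcs respect a single edge--pairing (for instance $4\!-\!\alpha$, $5\!-\!\beta$, $6\!-\!\gamma$, as forced by a parallel--strand embedding of both dipoles), that pairing preserves all nine horizontal faces; the three internal faces are destroyed, and the six remaining length--two vertical faces are forced, upon reconnecting their slots traversingly, to merge into three longer faces (each new edge joins one $T$--side and one $P_1$--side vertical face into a single face). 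The loss is then exactly $3+3=6$, giving $\omega(G')=\omega(G)-4<\omega(G)$. Finally, since $\alpha,\beta,\gamma$ all emanate from the single deleted vertex $P_1$, the deletion cannot separate $G$, so $G'$ is connected, exactly as argued in Proposition~\ref{prop:defD-1dipoledel}.

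The hard part will be the non--aligned sub--cases, in which the two dipoles' strand permutations are crossed so that no single edge--pairing carries all nine horizontal arcs; viewed as a $3$--regular bipartite multigraph on $\{4,5,6\}\times\{\alpha,\beta,\gamma\}$, the horizontal arcs then cannot all lie on one perfect matching, so some horizontal faces must be disrupted along with the vertical ones. The obstacle is that a naive count of disrupted arcs overshoots the bound of ten, so the argument must instead track \emph{net} face loss: one chooses the edge--pairing preserving the maximal number of horizontal faces and routes the remaining strands so that the disrupted horizontal faces are \emph{merged} rather than destroyed, showing that in every such configuration some admissible reconnection (and, if needed, the symmetric option of deleting dipole~2 instead) keeps $F(G)-F(G')\le 9$. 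Establishing this uniformly over all crossed strand structures, together with checking that neither channel is forced to be separating, is the main technical content of the proof.
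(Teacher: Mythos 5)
Your setup coincides with the paper's: you delete one of the two saturated $3$--dipoles, reconnect the six external stubs into three traversing edges, observe that for $D=5$ the vertex term gains $2\cdot\frac{D(D-1)}{4}=10$ so that it suffices to show the deletion destroys at most $9$ faces, and note that connectivity of $G'$ is automatic because the three $T$--side stubs all attach to the second dipole's other vertex. Your treatment of the aligned case is essentially sound (the claim that the vertical faces merge into \emph{exactly} three faces is an overstatement, since the merging pattern depends on the routing and on whether the six vertical arcs belong to distinct faces of $G$, but any routing loses at most $3+5=8\le 9$ faces there, so this is harmless).

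The genuine gap is the crossed case, which you explicitly defer as ``the main technical content'': that deferred step is precisely where the paper's proof does its work, and your sketched strategy is not enough to close it. Concretely, embed the dipole so that one internal edge has parallel strands; the horizontal arcs through the other two internal edges are then encoded by permutations $\sigma,\tau\in\mathfrak{S}(3)$, where a transposition costs one cut-and-reglue and a $3$--cycle costs two. In the worst case $\sigma=(123)$, $\tau=(132)$, \emph{every} deletion channel leaves misrouted arcs costing four cuts if they are repaired edge by edge (channel along the parallel edge: $2+2$ for $\sigma$ and $\tau$; channel matching $\sigma$: the parallel edge inherits $\sigma^{-1}$ and the third edge carries $\tau\sigma^{-1}$, again two $3$--cycles), so the naive count gives a face loss of at most $3+3+4=10$ and only $\omega(G')\le\omega(G)$, not the strict inequality the proposition asserts. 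The missing idea, which is the paper's key move, is to perform three cut-and-reglue operations each of which pairs one arc traversing edge $4$ with one arc traversing edge $5$ (a ``blue'' with a ``red'' face): pairing the arc $i_T\to\sigma(i)_P$ with the arc $\sigma(i)_T\to\tau\sigma(i)_P$ fixes both at once because distinct $3$--cycles in $\mathfrak{S}(3)$ are mutual inverses ($\tau\sigma=\mathrm{id}$), so three cuts repair all six misrouted arcs; a separate re-embedding argument handles $\sigma=\tau$ even. Your sentence about ``merging rather than destroying'' and choosing the pairing preserving the most horizontal faces gestures in this direction, but you neither exhibit the pairing nor verify it covers all $(\sigma,\tau)$, so the proof is incomplete at exactly its critical step.
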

\begin{proof}
 Let us consider a saturated $3$--dipole in $D=5$. We label its internal edges $4,5$ and $0$, and the external half edges $1,2$ and $3$ on both vertices. 
  We delete the dipole and reconstruct three edges. There are six possible channels of deletion: the half edge $1_T$ can be connected with any half edge $1_P, 2_P$ or $3_P$ and so on.
  \begin{figure}[htb]
 \begin{center}
 \includegraphics[width=10cm]{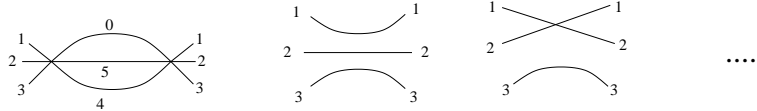}  
 \caption{The channels of deletion of a $3$--dipole in $D=5$.} \label{fig:simpleDipole}
 \end{center}
 \end{figure}
  
 In the process the three internal faces of length two are deleted, and three vertical external faces must be broken and reconnected (in the channel $1_T1_P$, $2_T2_P$, $3_T3_P$ for instance
 the external faces $1_T 2_T$ and $1_P2_P$ are cut and reglued into two faces $1_T1_P$ and $2_T2_P$, $1_T3_T$ and $1_P3_P$ are cut and reglued into $1_T1_P$ and $3_T 3_P$
 respectively $2_T3_T$ and finally $2_P3_P$ are cut and reglued into $2_T2_P$ and $3_T3_P$).
 We will show below that in all cases at most three cuts and regluings of horizontal external faces are needed in order to  create the remaining faces of the new edges. As the number of faces goes down by at most one for 
 each cut and reglue, we have in all cases $F(G') \ge F(G) - 3 - 3 -3$ and:
\begin{align*}
 \omega(G') \le 5 + 5 \bigg[ V(G) -2 \bigg] -  \bigg[  F(G) - 9\bigg] =\omega(G) -1 \;.
\end{align*}
 
 The external faces of the dipole have the structure presented in Fig.~\ref{fig:lastDdipole}.
 \begin{figure}[htb]
 \begin{center}
 \includegraphics[width=6cm]{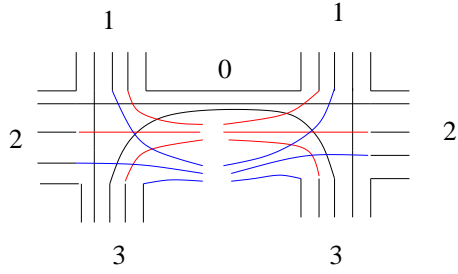}  
 \caption{External faces of a saturated $3$--dipole in $D=5$. The blue faces go through the edge $4$ and are connected by some permutation $\sigma$ and the red ones through the edge $5$ and are connected by some
 permutation $\tau$.} \label{fig:lastDdipole}
 \end{center}
 \end{figure}
We embed the dipole it in the plane with the edge $0$ with parallel strands (which can always be done after at most a relabeling of the half edges $1_P,2_P$ and $3_P$). 
The external faces going through the edges $4$ (in blue in Fig.~\ref{fig:lastDdipole}) and $5$ (in red Fig.~\ref{fig:lastDdipole}) are connected by arbitrary permutations.
 We denote the permutation on the edge $4$ (respectively $5$) by $\sigma$ (respectively $\tau$), that is the corner $1_T 4_T$ is connected by a strand to $4_P \sigma(1)_P$ and so on.
 Both $\sigma$ and $\tau$ are permutations over three elements and are (in cycle notation) either even $(1)(2)(3),(123),(132)$ or odd $(1)(23)$, $(2)(13)$ and $(3)(12)$.

We always delete the dipole in the channel $1_T1_P$, $2_T2_P$ and $3_T3_P$.
If the permutation $\sigma$ is odd, it can be turned into the identity permutation by one cut and regluing of faces. If $\sigma$ is even, two cuts are needed. 
We have the following cases:
\begin{itemize}
 \item One of $\sigma$ or $\tau$ is the identity. Then at most two cuts are needed.
 \item At least one of $\sigma$ or $\tau$ is odd. Then at most three cuts are needed.
 \item $\sigma =\tau$ and $\sigma$ is even. By permuting the half edges on the right to $1_P'=\sigma(1)_P, 2'_P = \sigma(2)_P, 3'=\sigma(3)_P$ both the edges $4$ and $5$ 
 are embedded with parallel strands (but the edge $0$ picks up the permutation $\sigma^{-1}$). Reconnecting the edges in the channel $1_T1'_P$, $2_T2'_T$ and $3_T3'_T$, at most two cuts are needed.
 \item Both $\sigma$ and $\tau$ are even, neither of the two is the identity and they are different. Then three cuts are needed. Let's assume that $\sigma = (123)$ and $\tau=(132)$. We then:
     \begin{itemize}
      \item cut the blue (defined by $\sigma$) face $1_T4_T 4_P 2_P$ and the red (defined by $\tau$) face $2_T5_T5_P1_P$ and reglue then as two faces $1_T1_P$ and $2_T2_P$,
      \item cut the blue (defined by $\sigma$) face $2_T4_T 4_P 3_P$ and the red (defined by $\tau$) face $3_T5_T5_P2_P$ and reglue then as two faces $2_T2_P$ and $3_T3_P$,
      \item cut the blue (defined by $\sigma$) face $3_T4_T 4_P 1_P$ and the red (defined by $\tau$) face $1_T5_T5_P3_P$ and reglue then as two faces $1_T1_P$ and $3_T3_P$.
     \end{itemize}
  Observe that the new faces mix a blue and a red, that is we cut and reglue faces which used to go through different internal propagators in the dipole.
\end{itemize}

\end{proof}
 
 \newpage
 
 \section{The leading order}
 
 In this last section we identify the leading order graphs of degree zero.
 Their classification relies on Theorem~\ref{thm:main}. We have the following result.

 \begin{lemma}\label{lem:diffaces}
  If a connected graph $G$ has degree $0$, then the $D$ strands of any edge in $G$ belong to $D$ different faces of $G$.
 \end{lemma}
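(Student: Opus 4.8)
The plan is to prove the statement locally, via a single-edge reconnection argument anchored on Theorem~\ref{thm:main}. Fix an edge $e$ joining a vertex $T$ and a vertex $P$, and embed it so that its $D$ strands realise a permutation $\pi$ between the $D$ strand-ends at $T$ and the $D$ strand-ends at $P$. Deleting $e$ while keeping the rest of the graph fixed leaves $2D$ open strand-ends, joined pairwise along the faces of $G$ that traversed $e$ by a matching $M$ of the open strands of $G\setminus e$. For any matching $\rho$ pairing the $D$ ends at $T$ with the $D$ ends at $P$ one may reglue $e$ with permutation $\rho$; this produces a legitimate graph $G_\rho$ of the model with $V(G_\rho)=V(G)$, the faces through the new edge being exactly the cycles of $M\cup\rho$. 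Hence $F(G_\rho)=F(G)-c(\pi)+c(\rho)$, where $c(\rho)$ denotes the number of cycles of $M\cup\rho$, and therefore $\omega(G_\rho)=\omega(G)+c(\pi)-c(\rho)$.

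Since every $G_\rho$ satisfies $\omega(G_\rho)\ge 0$ by Theorem~\ref{thm:main} while $\omega(G)=0$, I obtain $c(\rho)\le c(\pi)$ for all $\rho$: the physical permutation $\pi$ \emph{maximises} the number of faces carried by the edge. As $M\cup\rho$ has $2D$ vertices and $2D$ edges, $c(\rho)\le D$ with equality iff all its cycles have length two, that is iff $\rho=M$. The $D$ strands of $e$ lie on $D$ distinct faces precisely when $c(\pi)=D$, i.e. when $\pi=M$, so the statement reduces to showing $c(\pi)=D$. If $M$ happens to pair each end at $T$ with an end at $P$, then $\rho=M$ is an admissible reconnection, giving $c(\pi)\ge c(M)=D$ and hence $c(\pi)=D$ and $\pi=M$; in this case we are done immediately.

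The heart of the matter is therefore to exclude \emph{same-side} pairs in $M$, i.e. open paths of $G\setminus e$ joining two strand-ends at $T$ (equivalently two at $P$); such a pair is exactly a face of $G$ running through two strands of $e$, and would force $c(\pi)\le D-1$. I expect this to be the genuine obstacle, since the single-edge reconnection cannot detect it: with one same-side pair present the maximum of $c(\rho)$ is itself $D-1$, so local regluing alone is consistent with the defect. To remove it I would invoke the global structure of degree-zero graphs furnished by the proof of Theorem~\ref{thm:main} and argue by induction on $V(G)$. The base cases $V\le 2$ (the melonic ring graph and the two-vertex melonic graph of Lemma~\ref{lem:twovertices} and Definition~\ref{def:melons2}) are checked by hand: there each edge carries its $D$ strands on the $D$ distinct length-two faces pairing it with the other edges, and no same-side pair occurs.

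For the inductive step I would use that a connected degree-zero graph with more than two vertices must contain either a $D$--dipole or a \emph{saturated} $(D-1)$--dipole: the remaining configurations of Proposition~\ref{prop:deletion}, namely a deficient $(D-1)$--dipole together with a saturated $2$--dipole or (for $D=5$) two saturated $3$--dipoles, \emph{strictly} lower the degree by Proposition~\ref{prop:defD-1dipoledel} and Proposition~\ref{prop:lastdel}, hence cannot arise at degree zero. Deleting such a dipole yields a degree-zero graph $G'$ with fewer vertices; these dipoles are embedded with parallel strands and their deletion reconnects strands by traversing paths, so the open matchings $M_e$ of the edges of $G$ are inherited from those of $G'$ without creating same-side pairs. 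By the inductive hypothesis $G'$ has no same-side pairs, whence neither does $G$, and the reconnection argument above then forces $c(\pi)=D$ on every edge. The delicate point, and the one I would write out in full, is the bookkeeping of strands across the dipole insertion, showing that parallel-strand embeddings propagate the ``no same-side pair'' property.
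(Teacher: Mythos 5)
Your opening reconnection argument is correct as far as it goes: regluing the cut edge by any admissible permutation $\rho$ gives $\omega(G_\rho)=\omega(G)+c(\pi)-c(\rho)\ge 0$, hence $c(\pi)=D$ whenever the matching $M$ induced by $G\setminus e$ pairs $T$-ends with $P$-ends. But, as you yourself note, this local move is blind to same-side pairs, and that case is the entire content of the lemma. The induction you propose to close it has a genuine hole, not a bookkeeping one. Take a degree-zero $G$ containing a $D$--dipole with a returning external face ($p>0$ in the notation of Proposition~\ref{prop:Ddipoledel}): that returning face \emph{is} a same-side pair for the external edge of the dipole, i.e.\ exactly a violation of Lemma~\ref{lem:diffaces}. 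The dipole deletion breaks the returning strands and reconnects them into traversing ones, and the bound of Proposition~\ref{prop:Ddipoledel} gives only $\omega(G')\le\omega(G)$ when $p>0$, so $G'$ is a perfectly legitimate degree-zero graph with fewer vertices which satisfies your inductive hypothesis while $G$ violates the lemma. In other words, the implication your inductive step needs --- a same-side pair in $G$ survives as a same-side pair (or some other obstruction) in $G'$ --- fails precisely for the configurations the lemma is supposed to exclude, because the deletion destroys them. To forbid these configurations you would need to know that every $D$--dipole in a degree-zero graph is saturated; but that is Corollary~\ref{cor:saturated}, which the paper \emph{deduces from} Lemma~\ref{lem:diffaces}, so your route is circular at exactly the critical point. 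The same objection applies to your premise that the dipoles to be deleted ``are embedded with parallel strands and their deletion reconnects strands by traversing paths'': this presupposes the face structure that the lemma is meant to establish.

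For contrast, the paper's proof is a short global amplification argument that treats all cases (same-side pairs included) uniformly: if the $D$ strands of some edge lie on only $p<D$ faces, cut that edge (bipartiteness prevents disconnection), chain $r$ copies of the resulting two-point graph, and compute
\begin{align*}
\omega(G^r) < D + \frac{D(D-1)}{4}\,r\,V(G) - r\left[ (D-p) + \frac{D(D-1)}{4}V(G)\right] = D - r(D-p)\;,
\end{align*}
which is negative for $r$ large, contradicting Theorem~\ref{thm:main}. The key idea your proposal is missing is this replication step: Theorem~\ref{thm:main} is applied not to local regluings of $G$ itself (where a single same-side defect costs only $O(1)$ in the degree and so yields no contradiction) but to auxiliary chained graphs in which the defect is amplified linearly in $r$.
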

\begin{proof}
 Take any edge in $G$ and cut it. As $G$ is bipartite, the cut can not disconnect $G$ and one obtains a two point graph $\tilde G$. If $p<D$ faces of $G$ are cut,
 then the number of internal faces of $\tilde G$ is:
 \[ F^{\rm int}(\tilde G) = F^{\rm int}(G) = (D -p) + \frac{D(D-1)}{4}V(G) \;.\]
 We build a graph $G^r$ by connecting a chain of $r$ graphs $\tilde G$ together. We have:
 \[ F(G^r)> r \bigg[ (D -p) + \frac{D(D-1)}{4}V(G)\bigg] \;,\]
 therefore:
 \begin{align*}
  \omega(G^r) < D + \frac{D(D-1)}{4} r V(G) - r \bigg[ (D -p) + \frac{D(D-1)}{4}V(G)\bigg] = D-r(D-p)\;,
 \end{align*}
which becomes arbitrarily negative for $r$ large enough, contradicting Theorem~\ref{thm:main}.
 \end{proof}

 \begin{corollary}\label{cor:saturated}
  If a connected graph $G$ of degree 0 has a $D$--dipole, then the $D$--dipole is saturated.
 \end{corollary}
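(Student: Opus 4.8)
The goal is to show that a $D$--dipole sitting inside a degree--$0$ graph has no deficit, i.e. that $F^{\rm int}=\binom{D}{2}$. The plan is to combine Lemma~\ref{lem:diffaces}, which controls the horizontal external faces of the dipole, with the deletion of Proposition~\ref{prop:Ddipoledel} and the positivity of the degree from Theorem~\ref{thm:main}. First I would record that a $D$--dipole carries exactly two external half edges $0_T$ and $0_P$, belonging to genuine edges $e_T$ and $e_P$ of $G$; since $T$ and $P$ are joined by exactly $D$ edges, neither $e_T$ nor $e_P$ runs between them, so they reach the rest of $G$ and in particular $G$ has more than two vertices and $e_T,e_P$ carry $D$ strands each.

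The key step is to rule out returning horizontal external faces. In the notation of Proposition~\ref{prop:Ddipoledel}, among the $D$ horizontal external faces, $t$ traverse the dipole from $0_T$ to $0_P$ while $2p=D-t$ return to the external half edge they started from. I claim $p=0$. Indeed, a face returning to $0_T$ must enter and leave the dipole through two distinct mixed corners $0_TC_T$ and $0_TC'_T$ with $C\neq C'$; each such corner is a strand of the external edge $e_T$, so this face would contain two distinct strands of $e_T$. This contradicts Lemma~\ref{lem:diffaces}, which forces the $D$ strands of $e_T$ to lie in $D$ \emph{different} faces. The same argument applied to $e_P$ excludes faces returning to $0_P$, hence every horizontal external face traverses, and $p=0$, $t=D$.

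With $p=0$ there are no returning strands to reconnect, so the deletion of Proposition~\ref{prop:Ddipoledel} merely merges $e_T$ and $e_P$ into a single edge, removes the two vertices, and destroys only the internal faces of the dipole. The resulting connected graph $G'$ satisfies $V(G')=V(G)-2$ and $F(G')\ge F(G)-F^{\rm int}$. Using $\omega(G)=0$ this gives
\[
 \omega(G') \le D+\frac{D(D-1)}{4}\big[V(G)-2\big]-\big[F(G)-F^{\rm int}\big]=F^{\rm int}-\binom{D}{2}\;.
\]
By Theorem~\ref{thm:main} we have $\omega(G')\ge 0$, so $F^{\rm int}\ge\binom{D}{2}$; since an internal face has length at least two we always have $F^{\rm int}\le\binom{D}{2}$, and therefore $F^{\rm int}=\binom{D}{2}$, i.e. the dipole is saturated.

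The main obstacle is the second paragraph. The degree inequality by itself, applied to the general deletion, only yields $F^{\rm int}+p=\binom{D}{2}$ once one imposes $\omega(G')=\omega(G)=0$, and this would still permit an unsaturated dipole whose missing internal faces are compensated by returning external faces ($p>0$). It is precisely Lemma~\ref{lem:diffaces} that forbids this, so the delicate point is the clean identification of a returning external face with a repeated strand of a single external edge. One should keep in mind that Lemma~\ref{lem:diffaces} does \emph{not} directly forbid an internal face of length four running through four distinct internal edges; such a configuration is excluded only through the combination of $p=0$ with the positivity of $\omega(G')$, and everything after establishing $p=0$ is a routine degree count.
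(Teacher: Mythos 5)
Your proof is correct and follows essentially the same route as the paper: you invoke Lemma~\ref{lem:diffaces} to force $p=0$, $t=D$ (a returning horizontal face would put two distinct strands of the same external edge in one face), and then combine the deletion of Proposition~\ref{prop:Ddipoledel} with Theorem~\ref{thm:main} to force $F^{\rm int}=\binom{D}{2}$. The paper's proof is exactly this argument stated more tersely; your write-up merely makes the two steps explicit.
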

 \begin{proof}
  As $G$ has degree zero, a $D$--dipole in $G$ can not have any horizontal external faces which return on the same external edge. In the notation used in 
  Proposition\ref{prop:Ddipoledel}, we have $p=0$ and $t=D$. The degree of the graph can not decrease with the deletion of the $D$--dipole,
  hence the dipole must have exactly $\binom{D}{2}$ internal faces.
 \end{proof}
 
From Proposition~\ref{prop:manycorners} and Section~\ref{sec:deletions} we conclude that a graph with degree zero can only have:
\begin{itemize}
 \item a $D$--dipole, which in view of Corollary~\ref{cor:saturated} is saturated. The $D$--dipoles can be deleted iteratively.
 \item or has a saturated $(D-1)$--dipole such that:
          \begin{itemize}
           \item the $(D-1)$--dipole has $t_{\times}=0$ and the deletion in the parallel channel is separating,
           \item or the $(D-1)$--dipole has $t_{\times}\ge 1$ but then one of the three must hold:
                \begin{itemize}
                 \item $t_{\times}=1$ and the deletion is separating in the parallel channel,
                 \item $t_{\times}=D-2$ and the deletion is separating in the crossing channel,
                 \item $D=3$, $t_{\times} =1$ and the deletion is not separating in either channel.
                \end{itemize}
          \end{itemize}
\end{itemize}

In the last case, in Proposition~\ref{prop:satD-1dipoledel} we chose to perform the deletion in the non separating channel, and the degree did not increase with the deletion.
For graphs of degree zero, we proceed differently. We show that in fact the last three cases are excluded.
\begin{proposition}
 If a connected graph $G$ has a saturated $(D-1)$--dipole  which is either: 
    \begin{itemize}
     \item $t_{\times}=1$ and separating in the parallel channel,
     \item $t_{\times}=D-2$ and separating in the crossing channel,
    \end{itemize}
 then $\omega(G)>0$.
 
\end{proposition}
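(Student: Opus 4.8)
The plan is to argue by contradiction: I assume $\omega(G)=0$ and construct from $G$ a (disconnected) graph of strictly negative degree, which is impossible by Theorem~\ref{thm:main}; since that theorem also gives $\omega(G)\ge 0$, this forces $\omega(G)>0$. The reason the borderline cases are delicate is exactly that the \emph{non}-separating channel used in Proposition~\ref{prop:satD-1dipoledel} only yields $\omega(G')\le\omega(G)$ here (for $t_\times=1$ the crossing channel gives $\omega(G)-t_\times+1=\omega(G)$, and for $t_\times=D-2$ the parallel channel gives $\omega(G)-D+2+t_\times=\omega(G)$). Instead I delete the dipole in the \emph{separating} channel and count faces precisely, using Lemma~\ref{lem:diffaces} to pin them down. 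I treat $t_\times=1$ in the parallel channel; the crossing case $t_\times=D-2$ is identical after exchanging the parallel and crossing couples.

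First I set up the bookkeeping. In the separating channel the deletion splits $G$ into two components $G'_0\sqcup G'_1$, and only four external strands have to be reconnected: the two vertical external faces together with the single couple (the $t_\times=1$ crossing couple here, the $t_{||}=1$ parallel couple in the other case). Writing $C_0$ and $C_1$ for the two sides of the separation, these four strands are precisely the chords running between $C_0$ and $C_1$, while every other external strand stays on one side and its faces are untouched. Since any face of $G$ crossing between $C_0$ and $C_1$ must use an even positive number of these four chords, at most two faces of $G$ pass through them; call this number $\phi_G\le 2$.

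Now I invoke Lemma~\ref{lem:diffaces}. The two reconnected strands lying on the half-edge $0_T$ (one from a vertical face, one from the couple) are distinct strands of the same edge, hence under $\omega(G)=0$ they lie on two different faces; the same holds for $0_P$, $1_T$, $1_P$. Therefore the face-arcs through $C_0$ pair a strand of $0_T$ with a strand of $0_P$ (never the two strands of $0_T$ with each other), and likewise on the $C_1$ side. Crucially, this is exactly the condition that makes the reconnection matching these arcs \emph{admissible} (it joins $0_T$-strands to $0_P$-strands as a new edge must). Performing that reconnection, each of the four new edge-strands closes the arc it short-circuits into a face of its own, producing two faces on each side, so $\phi_{G'}=4$.

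Finally I collect the counts. The deletion removes two vertices and the $\binom{D-1}{2}$ internal faces of the dipole and replaces the $\phi_G$ crossing faces by the $\phi_{G'}$ new ones, so $F(G'_0)+F(G'_1)=F(G)-\binom{D-1}{2}-\phi_G+\phi_{G'}$. Using $-\tfrac{D(D-1)}{2}+\binom{D-1}{2}=-(D-1)$ a one-line computation gives
\[
\omega(G'_0)+\omega(G'_1)=\omega(G)+1+\phi_G-\phi_{G'}.
\]
With $\phi_{G'}=4$ and $\phi_G\le 2$ this is $\le\omega(G)-1=-1$, contradicting $\omega(G'_0),\omega(G'_1)\ge 0$ from Theorem~\ref{thm:main}; hence $\omega(G)>0$. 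The one step requiring care — and the reason Lemma~\ref{lem:diffaces} is indispensable — is the pairing of the external arcs: without the ``different faces'' constraint an arc could join the two strands of $0_T$, the admissible reconnection would then close only \emph{three} faces instead of four, the identity above would read $\omega(G'_0)+\omega(G'_1)=\omega(G)$, and the strict inequality would be lost.
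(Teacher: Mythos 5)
Your proof is correct and follows essentially the same route as the paper: delete the saturated $(D-1)$--dipole in the separating channel, use Lemma~\ref{lem:diffaces} (under the assumption $\omega(G)=0$) to force the four crossing strands' arcs to pair $0_T$-ends with $0_P$-ends so that the reconnection produces four new faces while at most two faces of $G$ are destroyed, and conclude via Theorem~\ref{thm:main} applied to the two components; your identity $\omega(G'_0)+\omega(G'_1)=\omega(G)+1+\phi_G-\phi_{G'}$ with $\phi_G\le 2$, $\phi_{G'}=4$ is exactly the paper's count $\omega(G)=\omega(G_0)+\omega(G_1)+1$ in different bookkeeping. The only presentational difference is that the paper pins down the external face structure completely (so $\phi_G=2$) whereas you only need the inequality $\phi_G\le 2$, which changes nothing of substance.
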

\begin{proof}
Assume that a graph $G$ with degree zero has such dipoles. The two cases are represented in Fig.~\ref{fig:sepcross}
 \begin{figure}[htb]
 \begin{center}
 \includegraphics[width=10cm]{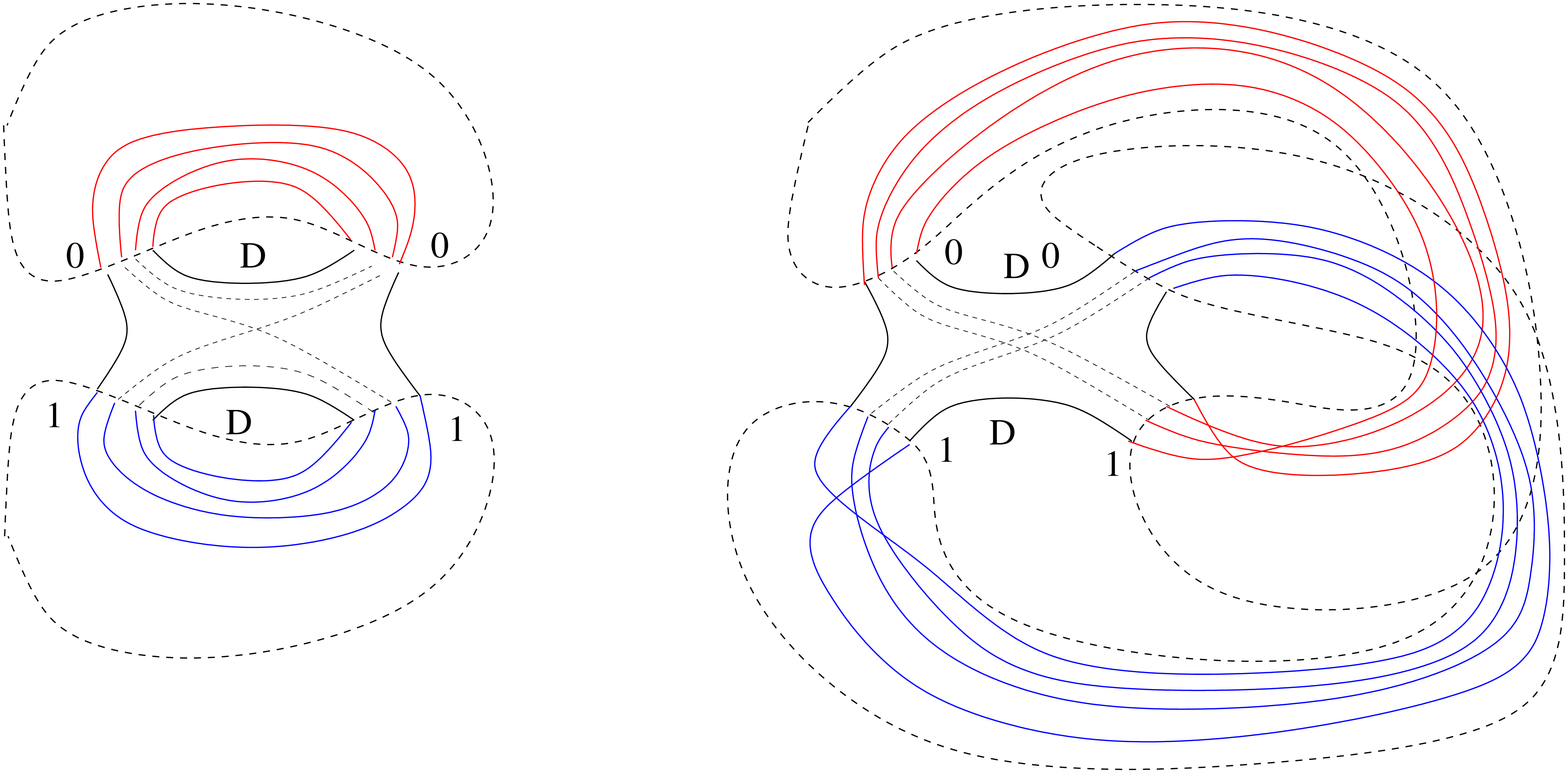}  
 \caption{$(D-1)$--dipoles with $t_{\times}=1$ or $t_{\times}=D-2$ which are separating in the appropriate channel. The dipole has $\binom{D-1}{2}$ additional internal faces which are not represented.} \label{fig:sepcross}
 \end{center}
 \end{figure}
 By performing the $(D-1)$--dipole deletion in the separating channel, $G$ splits into two graphs $G_0$ and $G_1$, such that the new edge $0$ belongs to $G_0$ 
 and the new edge $1$ belongs to $G_1$. By Theorem \ref{thm:main}, both $G_0$ and $G_1$ have non negative degree:
 \[
   F(G_0) = D + \frac{D(D-1)}{4}V(G_0) -\omega(G_0) \;,\qquad    F(G_1)= D + \frac{D(D-1)}{4}V(G_1) -\omega(G_1) \;.
 \]
 Observe that we have \emph{not proven} that the deletion of such $(D-1)$--dipoles in the separating channel does not increase the degree. 
 In Proposition~\ref{prop:satD-1dipoledel} we have  only shown that the deletion in the \emph{non separating} channel does \emph{not increase} the degree, but nothing is know so far about
 the deletion in the separating channel:  in principle $G_0$ and $G_1$ could have strictly positive degrees.
 
Following in $G$ the strands as they enter one of the connected components and taking into account that, by Lemma~\ref{lem:diffaces},
on any edge in $G$ the $D$ strands must belong to different faces, it follows that the external 
faces of the dipole necessarily have the structure in Fig.~\ref{fig:sepcross}. We have in both cases:
\begin{align*}
 & V(G) = V(G_0) + V(G_1) + 2 \; , \crcr
 & F(G) = \bigg[ F(G_0) -D \bigg]  +\bigg[ F(G_1) -D \bigg] + \binom{D-1}{2} + 2(D-2) + 2 \;,
\end{align*}
hence:
\begin{align*}
 \omega(G) = \omega(G_0) + \omega(G_1) + 1 \;,
\end{align*}
which contradicts the assumption that $G$ had zero degree.
\end{proof}

Finally, the last case is excluded by the following Proposition.
 \begin{proposition}
 In $D=3$, if a connected graph $G$ has a non separating $2$--dipole with $t_{\times}=1$ then $\omega(G)>0$.
 \end{proposition}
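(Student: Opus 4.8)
The plan is to argue by contradiction: suppose $\omega(G)\le 0$, so that $\omega(G)=0$ by Theorem~\ref{thm:main}, and show that such a $G$ must in fact have strictly positive degree. I would first specialize the notation of Proposition~\ref{prop:satD-1dipoledel} to $D=3$. The $2$--dipole has external half edges $0_T,1_T$ and $0_P,1_P$; since $t_{\times}=1$, exactly one internal edge (say $3$) carries the parallel couple of horizontal external faces $0_T 3_T 3_P 0_P$ and $1_T 3_T 3_P 1_P$, while the other internal edge (say $2$) carries the crossing couple $0_T 2_T 2_P 1_P$ and $1_T 2_T 2_P 0_P$; finally there are the two vertical external faces through the corners $0_T1_T$ and $0_P1_P$, which I denote $V_T$ and $V_P$. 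Because $\omega(G)=0$, Lemma~\ref{lem:diffaces} applies, so the three strands of every edge of $G$ lie in three distinct faces. Applied to the internal edges $2,3$ and to the edges of $G$ incident to the four external half edges, this keeps the parallel, crossing and vertical arcs on pairwise distinct faces locally and pins down how these external faces thread into the remainder of the graph, exactly as in the separating analysis leading to Fig.~\ref{fig:sepcross}.

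I would then delete the dipole. By hypothesis the deletion is non--separating in both channels, so carrying it out (in either channel) yields a single connected graph $G'$ with $V(G')=V(G)-2$, and Theorem~\ref{thm:main} gives $\omega(G')\ge 0$. The crux is the face count: the deletion destroys the one internal face and reglues the four external arcs $V_T,V_P,\,0_T 2_T 2_P 1_P,\,1_T 2_T 2_P 0_P$ by two elementary reconnections (the parallel couple through edge $3$ is absorbed untouched into the two new edges). The aim is to show that the non--separating hypothesis forces at least one of these two reconnections to cut and reglue two strands that already belong to the \emph{same} face of $G$, so that this reconnection cannot decrease the number of faces. Then the deletion destroys at most two faces, $F(G)-F(G')\le 2$, and since $\omega(G)-\omega(G')=3-\bigl(F(G)-F(G')\bigr)$ in $D=3$ we obtain
\[
 \omega(G)\ \ge\ \omega(G')+1\ \ge\ 1 ,
\]
contradicting $\omega(G)=0$ and hence proving $\omega(G)>0$.

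The step I expect to be the main obstacle is exactly this claim that one reconnection acts within a single face. It is the non--separating counterpart of the clean splitting identity $F(G)=\bigl[F(G_0)-D\bigr]+\bigl[F(G_1)-D\bigr]+\binom{D-1}{2}+2(D-2)+2$ of the preceding proposition, but now the four reglued arcs must be followed through one connected remainder rather than across two components. Concretely I would trace each external face globally, using Lemma~\ref{lem:diffaces} to keep the relevant strands on distinct faces and the non--separating hypothesis to guarantee that the vertical arcs $V_T,V_P$ are linked with a crossing arc inside the remainder; the delicate point is to exclude the configuration in which both reconnections merge two distinct faces, which would make the deletion only degree--preserving, as in Proposition~\ref{prop:satD-1dipoledel}, rather than degree--decreasing.
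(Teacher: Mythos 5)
Your proposal has a genuine gap, and it sits exactly where you flagged it: the claim that the non-separating hypothesis forces at least one of the two reconnections to act within a single face of $G$. This claim is not merely delicate --- it is false in the only configuration that survives, so no amount of face-tracing will establish it. Here is the issue. Your strict-decrease mechanism (a reconnection inside a single face loses at most $0$ faces, hence $F(G)-F(G')\le 2$ and $\omega(G')\le\omega(G)-1<0$, contradicting Theorem~\ref{thm:main}) is precisely how the paper \emph{opens} its proof, but the paper uses it in the opposite direction: it concludes that in a degree-zero graph the six external arcs of the dipole (the two vertical arcs, the two parallel arcs and the two crossing arcs) must all belong to six \emph{distinct} faces of $G$, since otherwise some deletion channel strictly decreases the degree. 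Once all six arcs lie on distinct faces, each of the two reconnections in either channel merges two genuinely different faces, so $F(G)-F(G')=3$ and the deletion is \emph{exactly} degree-preserving, $\omega(G')=\omega(G)=0$. The deletion then produces a perfectly legitimate smaller degree-zero graph and yields no contradiction whatsoever; your local bookkeeping terminates without force.

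The missing idea is a global one, of the same type as the proof of Lemma~\ref{lem:diffaces}. After pinning down the external face structure (all six faces distinct, threaded as in Fig.~\ref{fig:nonsepcross}), the paper cuts the four external edges incident to the dipole to obtain a four-point graph $\hat G$ with $V(\hat G)=V(G)-2$, and uses $F(G)=3+\tfrac32 V(G)$ together with $F(G)=7+F^{\rm int}(\hat G)$ (one internal plus six distinct external faces of the dipole) to get $F^{\rm int}(\hat G)=-1+\tfrac32 V(\hat G)$. Chaining $r$ copies of $\hat G$ into a closed cycle gives a graph $G^r$ with
\begin{equation*}
\omega(G^r)=3+\tfrac32\, r\, V(\hat G)-\Bigl[\,1+2r+r\,F^{\rm int}(\hat G)\Bigr]=2-r\;,
\end{equation*}
which is arbitrarily negative for large $r$, contradicting Theorem~\ref{thm:main}. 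In short: the dipole deletion can only ever show the configuration is degree-neutral; ruling it out at degree zero requires cloning the rest of the graph and exhibiting a violation of the $1/N$ bound, and that construction is absent from your proposal.
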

 \begin{proof}
 Assume that a graph of degree zero has such a dipole. Observe the faces $0_T1_T$ and $0_P1_P$, the faces $0_T 0_P$, and $1_T1_P$
 and the faces $0_T1_P$ and $1_T 0_P$ must all be different, otherwise there exists a deletion channel of the $2$--dipole
 in which the number of faces goes down by only $1$ (instead of going down by $3$) and the degree strictly decreases by performing the deletion in this channel.
 Taking this and Lemma~\ref{lem:diffaces} into account, the structure of the external faces of the dipole must be the one presented in in Fig.~\ref{fig:nonsepcross}.
\begin{figure}[htb]
 \begin{center}
 \includegraphics[width=5cm]{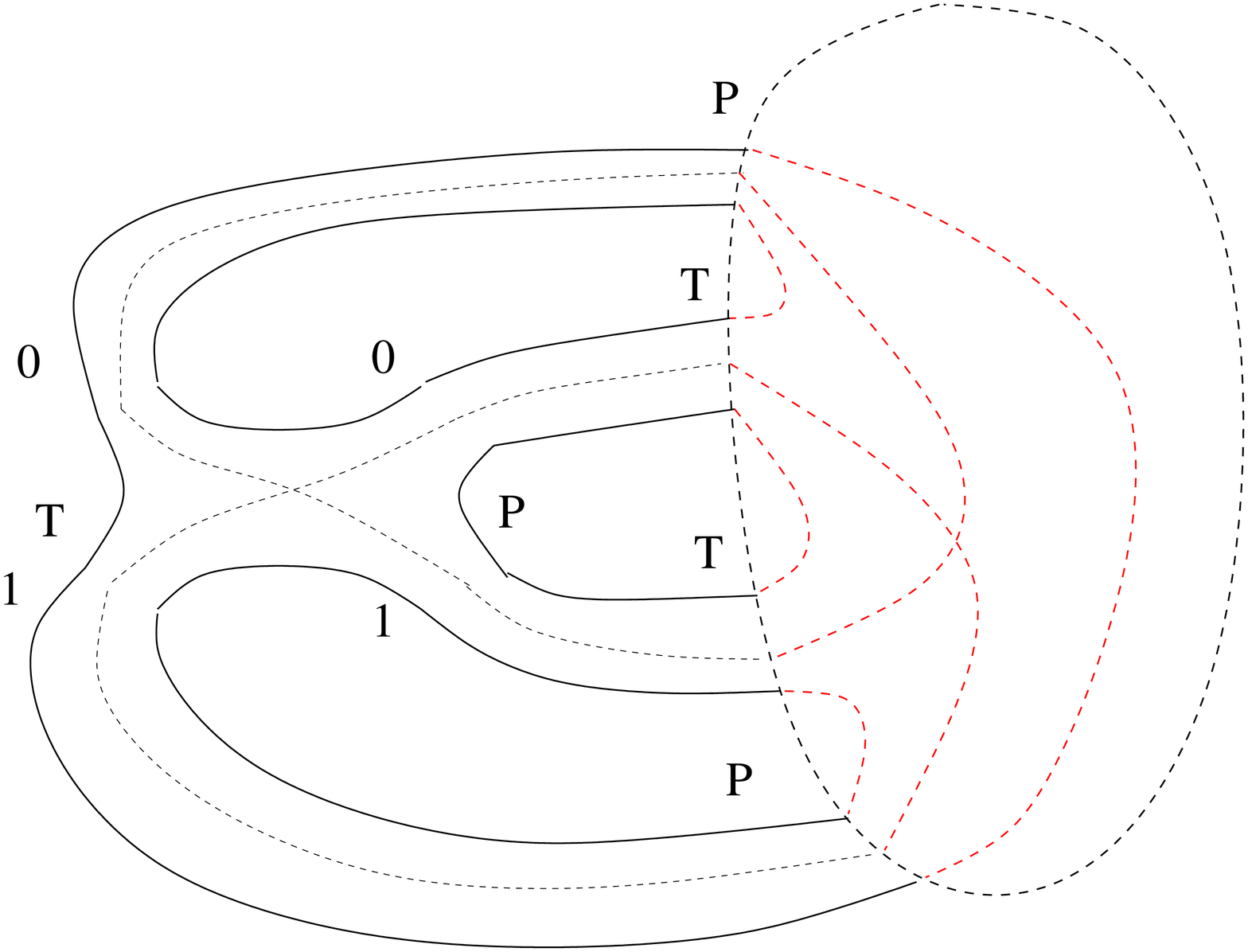}  
 \caption{A graph in $D=3$ with a non separating $2$--dipole with $t_{\times}=1$.} \label{fig:nonsepcross}
 \end{center}
 \end{figure}
  
  By cutting the four edges incident to the dipole one obtains a four point graph $\hat G$, having $ F^{\rm int}(\hat G)$ internal faces and:
  \begin{align*}
  & F(G) = 3 + \frac{3}{2}V(G) \;, \qquad V(\hat G) = V(G) -2 \;, \qquad F(G) =7 + F^{\rm int}(\hat G)  \;,
  \end{align*}
  hence $ F^{\rm int}(\hat G) =  -1+  \frac{3}{2}V(\hat G)$.
  One can  chain several copies of $\hat G$ and build a graph $G^r$ as depicted in Fig.~\ref{fig:chain11}.  
  \begin{figure}[htb]
 \begin{center}
 \includegraphics[width=6cm]{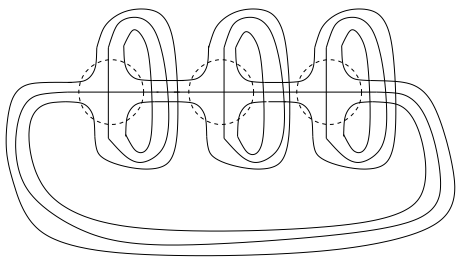}  
 \caption{A chain graph built from $\hat G$.} \label{fig:chain11}
 \end{center}
 \end{figure}
  
  The number of faces and vertices of a chain of length $r$ is:
  \[
   F(G^r) = 1+ 2r+ r  F^{\rm int}(\hat G) \;,  \qquad  V(G^r)= r V(\hat G) \;,
  \]
  hence the degree of the chain graph is:
  \[
   \omega(G^r)  = 3 + \frac{3}{2}  r V(\hat G) - \bigg[  1+ 2r+ r  F^{\rm int}(\hat G)  \bigg] =2-r \;,
  \]
 which can be arbitrarily negative, contradicting Theorem\ref{thm:main}.  
 \end{proof}

 \begin{theorem}
A graph has degree zero if and only if it is melonic in the sense of Definition~\ref{def:melons2}.
\end{theorem}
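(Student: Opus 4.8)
The plan is to prove both implications by induction on the number of vertices $V(G)$, using the degree formula~\eqref{eq:degdef} and the dipole reductions of Section~\ref{sec:deletions}. The direction melonic $\Rightarrow$ degree zero is the easy one. The melonic ring graph has degree zero, as noted when the ring graphs were introduced. For the inductive step, a melonic insertion replaces one edge $e$ of a melonic graph $G'$ by a saturated $D$-dipole with parallel strands. Since $e$ carries $D$ strands and the $D$ inserted edges are parallel, each strand of $e$ simply traverses the new dipole; hence every face of $G'$ survives (merely lengthened) and the only faces created are the $\binom{D}{2}$ internal faces of the saturated dipole. Thus $V$ grows by $2$ and $F$ by $\binom{D}{2}=\frac{D(D-1)}{2}$, so $\omega$ changes by $\frac{D(D-1)}{4}\cdot 2-\frac{D(D-1)}{2}=0$, and $\omega(G)=\omega(G')=0$ by induction.

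For the converse, degree zero $\Rightarrow$ melonic, I would again induct on $V(G)$. The base cases are $V=0$, where the only degree-zero ring is the melonic ring, and $V=2$, where Remark~\ref{remark} shows that every degree-zero two-vertex graph embeds with all edges parallel-stranded and is therefore obtained by inserting a single melon in the ring, hence melonic. For $V>2$, Proposition~\ref{prop:manycorners} forces the existence of a vertex with at least $\binom{D-1}{2}$ corners on faces of length two; the case analysis of Section~\ref{sec:deletions}, combined with the two exclusion Propositions proved just above, shows that such a vertex belongs either to a saturated $D$-dipole, or to a saturated $(D-1)$-dipole with $t_{\times}=0$ whose parallel-channel deletion is separating.

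If $G$ has a saturated $D$-dipole, I delete it via Proposition~\ref{prop:Ddipoledel} with $t=D$ and $p=0$ (saturation being automatic by Corollary~\ref{cor:saturated}): this yields a connected $G'$ with $V(G')=V(G)-2$ and $\omega(G')\le\omega(G)=0$, so $\omega(G')=0$ by Theorem~\ref{thm:main}. By induction $G'$ is melonic, and since this deletion is exactly the inverse of a melon insertion---the $D$ internal edges being embedded with parallel strands, as forced by Lemma~\ref{lem:diffaces} together with saturation---the graph $G$ is melonic as well.

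The remaining case, the separating $(D-1)$-dipole, is where I expect the real difficulty. Deleting it in the parallel channel splits $G$ into two connected graphs $G_0$ and $G_1$ with $\omega(G_0)+\omega(G_1)=\omega(G)=0$; both therefore have degree zero and strictly fewer vertices, so both are melonic by the induction hypothesis. The key point---and the main obstacle---is to deduce from this that $G$ itself is melonic, which I would do by showing that $G$ necessarily contains a saturated $D$-dipole and then falling back on the previous case. Concretely, the $(D-1)$-dipole gluing modifies only the single marked edge of each of $G_0$ and $G_1$, and a nontrivial melonic piece always supplies a melon avoiding that edge: if $G_i$ is the ring, the $D-1$ internal edges of the dipole close through it into a $D$-dipole; if $G_i$ is a two-vertex melon, its melon reappears as a $D$-dipole once the marked edge is cut and the vertices acquire their external half edges; and if $V(G_i)>2$ then $G_i$ contains at least two disjoint melons, at least one of which is untouched. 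Either way $G$ has a saturated $D$-dipole, whose deletion reduces $G$ to a smaller connected degree-zero graph, melonic by induction, so $G$ is melonic. Verifying that the $t_{\times}=0$ separating structure is exactly what makes $\omega(G)=\omega(G_0)+\omega(G_1)$ and the survival of a melon hold simultaneously is the technical heart of this last case.
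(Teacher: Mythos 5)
Your proof is correct, and its skeleton is the one the paper uses: invoke Proposition~\ref{prop:manycorners}, the case analysis of Section~\ref{sec:deletions}, and the two exclusion Propositions of the last section to conclude that a connected degree-zero graph with more than two vertices contains either a saturated $D$--dipole or a saturated $(D-1)$--dipole with $t_{\times}=0$ whose parallel-channel deletion is separating, and then reduce by dipole deletions. Where you genuinely depart from the paper is in the separating $(D-1)$--dipole case. The paper iterates \emph{both} kinds of deletions down to a union of two-vertex melonic pieces and then declares it ``obvious'' that the reverse operations --- inserting saturated $D$--dipoles and \emph{joining} two graphs at a separating, saturated, $t_{\times}=0$ $(D-1)$--dipole --- generate exactly the melonic graphs; the fact that the joining operation preserves melonicity is never actually argued. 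You avoid having to prove that fact directly: since $G_0$ and $G_1$ are degree-zero hence melonic by induction, you locate an elementary melon (a pair of vertices joined by exactly $D$ edges) avoiding the joining edge --- the ring and two-vertex components being checked by hand --- observe that it survives as a saturated $D$--dipole of $G$, delete it via Proposition~\ref{prop:Ddipoledel}, and close the induction using only one reverse operation, melon insertion, which is literally Definition~\ref{def:melons2}. This buys a self-contained induction in place of the paper's ``obviously'', at the price of one extra combinatorial fact about melonic graphs.

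That extra fact is the one soft spot: you assert, without proof, that a melonic graph with more than two vertices contains at least two disjoint elementary melons. It is true, but it needs its own small induction on the insertion history: the four-vertex melonic graph has exactly two elementary $D$--dipoles; each further insertion creates one new elementary melon and destroys at most one (namely the melon, if any, one of whose internal edges is subdivided), so the count never drops below two; and two distinct elementary melons are automatically vertex-disjoint since a common vertex would carry at least $2D>D+1$ edges. You also use implicitly that any elementary $D$--dipole of a melonic graph is saturated with parallel strands (so that its deletion is indeed the inverse of a melonic insertion): this holds because two vertices created in different insertions can be joined by at most one edge, so any pair joined by $D$ edges is an insertion pair whose retained internal edges are parallel and pairwise bound faces of length two --- equivalently, one can note as in Remark~\ref{remark} and Corollary~\ref{cor:saturated} that saturation by itself forces the parallel-strand structure. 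With these two easy lemmas made explicit, your argument is complete and, if anything, fills a step the paper leaves to the reader.
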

 \begin{proof}
From the previous discussion we conclude that a graph of degree zero reduces to a union of melonic graphs with two vertices by iterated deletions of:
 \begin{itemize}
  \item  saturated $D$--dipoles,
  \item completely separating deletions of saturated $(D-1)$--dipoles with $t_{\times}=0$.
 \end{itemize}

 Observe that both the saturated $D$--dipoles and the saturated $(D-1)$--dipoles with $t_{\times}=0$ can be embedded in the plane in such a way that that all the edges
 are embedded as parallel edges with parallel strands. Obviously all graphs of degree $0$ are generated by doing the reverse, namely
adding saturated $D$--dipoles or joining at separating, saturated $t_{\times}=0$, $(D-1)$--dipoles.  
 \end{proof}

 \newpage

 \bibliography{/home/razvan/Desktop/lucru/Ongoing/Refs/Refs.bib}
 
\end{document}